\newif\ifone
\newif\iftwo
\newcommand{\red}[1]{\color{red}#1\color{black}}
\newcommand{\green}[1]{\color{green}#1\color{black}}
\newtheorem{theorem}{Theorem}
\newtheorem{lemma}[theorem]{Lemma}
\newtheorem{definition}{Definition} 
\newtheorem{remark}{Remark}
\newtheorem{example}{Example}
\newcommand{\smat}{\ \ }
\DeclareMathOperator{\blkdiag}{blkdiag}
\newlength{\spacelen}
\newcommand{\ws }[1]{\settowidth{\spacelen}{#1}\makebox[\spacelen]{}}
\newcommand{\wid}[3]{\settowidth{\spacelen}{#1}\makebox[\spacelen][#2]{$#3$}}
\newcommand{\wn}{\ws{$\gamma$}}
\newcommand{\rL}[2][\wn]{\wid{$\gamma L_{00}$}{l}{#1 L_{#2}}}
\newcommand{\bzero}{\mathbf{0}}
\newcommand{\Fq}{\mathbb{F}_q}
\newcommand{\bI}{\mathbf{I}}
\newcommand{\bO}{\mathbf{O}}
\newcommand{\cC}{\mathcal{C}}
\newcommand{\cF}{\mathcal{F}}
\newcommand{\cH}{\mathcal{H}}
\newcommand{\cI}{\mathcal{I}}
\newcommand{\cK}{\mathcal{K}}
\newcommand{\cR}{\mathcal{R}}
\newcommand{\response}[2][blue]{\textbf{Response:} \color{#1}#2\color{black}{}}
\newif\ifAppendix
\title{Constructing $(h,d)$ cooperative MSR codes with sub-packetization $(d-k+h)(d-k+1)^{\lceil n/2 \rceil}$}
\author{
    Zihao~Zhang,
    Guodong~Li,
    and Sihuang~Hu
    \thanks{
        Research partially funded by
        National Key R\&D Program of China under Grant No. 2021YFA1001000,
        National Natural Science Foundation of China under Grant No. 12001322 and 12231014,
        a Taishan scholar program of Shandong Province, and CCF-Huawei Populus Grove
        Fund. An early version of this paper is accepted by the 2024 IEEE International
        Symposium on Information Theory. (Corresponding author: Sihuang Hu.) } \thanks{
        Zihao Zhang, Guodong Li and Sihuang Hu are with 
        State Key Laboratory of Cryptography and Digital Economy Security, Shandong University, Qingdao, 266237, China,
        Key Laboratory of Cryptologic
        Technology and Information Security, Ministry of Education, Shandong
        University, Qingdao, Shandong, 266237, China and School of Cyber Science and
        Technology, Shandong University, Qingdao, Shandong, 266237, China. S. Hu is
        also with Quan Cheng Laboratory, Jinan 250103, China. Email: \{zihaozhang,
        guodongli\}@mail.sdu.edu.cn, husihuang@sdu.edu.cn } }
\newcommand{\bt}{\boxtimes}
\newcommand{\T}[3]{\Phi_{#1,#3}(#2)}
\newcommand{\Tg}[2]{\Phi_{\frac n2,#2}(#1)}
\newcommand{\Mp}[1]{\cI_{\cF}\left(#1\right)}
\newcommand{\R}[1]{\mathbf{rot}\left(#1\right)}
\begin{document}

\maketitle

\begin{abstract}
    We address the multi-node failure repair challenges for MDS array codes. Presently, two primary models are employed for multi-node repairs: the centralized model where all failed nodes are restored in a singular data center, and the cooperative model where failed nodes acquire data from auxiliary nodes and collaborate amongst themselves for the repair process.
    This paper focuses on the cooperative model, and we provide explicit constructions of optimal MDS array codes with $d$ helper nodes under this model. The sub-packetization level of our new codes is $(d-k+h)(d-k+1)^{\lceil n/2 \rceil}$ where $h$ is the number of failed nodes, $k$ the number of information nodes, and $n$ the code length. This improves upon recent constructions by Liu \emph{et al.} (IEEE Transactions on Information Theory, Vol. 69, 2023).
\end{abstract}

\section{Introduction}

\IEEEPARstart{E}{rasure} codes are widely used in current distributed storage systems, where they enhance data robustness by adding redundancy to tolerate data node failures.
Common erasure codes include maximum distance separable (MDS) codes and locally repairable codes (LRC).
Particularly, MDS codes have garnered significant attention because they provide the maximum failure tolerance for a given amount of storage overhead.

An $(n,k,\ell)$ \emph{array code} has $k$ information coordinates and $r = n-k$
parity-check coordinates, where each coordinate is a vector in $\Fq^{\ell}$ for
some finite field $\Fq$. Formally, a (linear) $(n, k, \ell)$ array code $\cC$
can be defined by its parity-check equations, i.e., $$\cC = \{(C_0, \dots,
    C_{n-1}) : H_0 C_0 + \cdots + H_{n-1} C_{n-1} = \bzero\},$$ where each $C_i$ is
a column vector of length $\ell$ over $\Fq$, and each $H_i$ is a $r\ell \times
    \ell$ matrix over $\Fq$. We call $\cC$ an \emph{MDS array code} if any $r$ out
of its $n$ coordinates can be recovered from the other $k$ coordinates. To be
specific, let $\cF\ = \{i_1, i_2, \dots, i_r\}\subset[n]$ be the collection of
indices of $r$ failed nodes, we have $$ \sum_{i\in \cF}H_iC_i = -\sum_{i\in
        [n]\setminus\cF}H_iC_i, $$ where we use $[n]$ to denote the set $\{0,1,\dots,
    n-1\}$. Then we know that the $r$ coordinates $C_i\ (i\in \cF)$ can be
recovered from the other $k$ coordinates $C_i\ (i\in[n]\setminus\cF)$ if and
only if the square matrix $[H_{i_1} \smat H_{i_2} \smat \dots \smat H_{i_r}]$
is invertible. Equivalently, we say a set of $n$ matrices $H_0, H_1, \dots,
    H_{n-1}$ in $\Fq^{r\ell \times \ell}$ defines an $(n, k, \ell)$ MDS array code
if
\begin{equation*}
    [H_{i_1} \smat H_{i_2} \smat \cdots \smat H_{i_r}] \text{~is invertible,~for~} \{i_1, i_2, \dots,  i_r\}\subset [n].
\end{equation*}

With the emergence of large-scale distributed storage systems, the notion of
\emph{repair bandwidth} was introduced to measure the efficiency of recovering
the erasure of a single codeword coordinate. The seminal work by Dimakis
\emph{et. al.}~\cite{5550492} pointed out that we can repair a single failed
node by smaller repair bandwidths than the trivial MDS repair scheme. More
precisely, for an $(n,k,\ell)$ MDS array code, the optimal repair bandwidth for
a single node failure by downloading data from $d\ge k$ helper nodes is
\begin{equation}\label{eq:cut-set}
    \cfrac{d\ell}{d-k+1}.
\end{equation}
We call an $(n,k,\ell)$ MDS array code minimum storage regenerating (MSR) code with \emph{repair degree} $d$ if it achieves the lower bound \eqref{eq:cut-set} for the repair of any single erased coordinate from any $d$ out of $n-1$ remaining coordinates.
Please see~\cite{5961826, 6352912, 6737213, 7836332, 8006889, 7990181, 8104060, 8410002, 8437486,10024392, li2023msr} and references therein for
the constructions and studies of MSR codes.

MSR codes can efficiently recover a single failed node using the smallest
possible bandwidth. Naturally, new variants of MSR codes are adopted to handle
the case when $h>1$ nodes fail simultaneously. Under the centralized repair, a
single repair center downloads helper data from $d$ helper nodes and uses this
data to produce $h$ replacement nodes (please see~\cite{6401191, 8469091,
    7762203, 8006922, 8552670, 8437474, 8849496, 6880379, 8340062, 8494769,
    li2023new, Zhang2024NewCM} and references therein). Another scheme of repairing
multiple failed nodes simultaneously is cooperative repair, where failed nodes
acquire data from auxiliary nodes and collaborate amongst themselves for the
repair process. Notably, the cooperative model has demonstrated greater
robustness compared to its centralized counterpart, being able to deduce a
corresponding centralized model under equivalent parameters. Please refer
to~\cite{6565355, 5978920, 6847953, 10.1504/IJICOT.2016.079495, 8410934,
    8990088, 9137364, 10006831} and references therein for the results on
cooperative MSR codes.

This paper primarily focuses on the cooperative model and all subsequent
references to repair bandwidth and cut-set bounds are made within this context.

\begin{lemma}{(Cut-set bound~\cite{6565355,8410934})}
    \label{thm:cut-set bound}
    For an $(n,k,\ell)$ MDS array code, the optimal repair bandwidth for $h$ failed nodes by downloading information from $d$ helper nodes under the cooperative repair scheme is
    \begin{equation}\label{eq:coMSR_bound}
        \cfrac{h(d+h-1)\ell}{d-k+h}.
    \end{equation}
\end{lemma}

We say that an $(n,k,\ell)$ MDS array code $\cC$ is an \emph{$(h,d)$-MSR code
    under the cooperative model} if any $h$ failed nodes can be recovered from any
other $d$ helper nodes with total bandwidth achieving the lower
bound~\eqref{eq:coMSR_bound}. Note that a $(1,d)$-MSR code is just an MSR code
with repair degree $d$.

\subsection{Previous works on cooperative MSR codes}

In \cite{8410934}, Ye and Barg provided an explicit construction for
cooperative MSR codes with all admissible parameters. The sub-packetization
level of the construction in \cite{8410934} is given by
$((d-k)^{h-1}(d-k+h))^{\binom{n}{h}}$. Subsequent work has been focused on
reducing the sub-packetization of cooperative MSR codes. In \cite{8990088},
Zhang \emph{et al.} introduced a construction with optimal access property,
where $\ell=(d-k+h)^{\binom{n}{h}}$. Subsequently, in the work of Ye
\cite{9137364}, the sub-packetization was further reduced to
$(d-k+h)(d-k+1)^n$. More recently, Liu's work \cite{10006831} achieved even
lower sub-packetization for the case $d=k+1$: the sub-packetization of the new
construction is $o\cdot 2^n$ where $o$ is the largest odd number such that
$o\mid (h+1)$.

\begin{table*}[ht]
    \centering
    \begin{tabular}{|l|c|l|c|}
        \hline
        \makecell[c]{Codes}                                   & Sub-Packetization $\ell$              & \makecell[c]{Field Size $q$ } & Restrictions \\
        \hline
        Ye and Barg~2019~\cite{8410934}         & $((d-k)^{h-1}(d-k+h))^{\binom{n}{h}}$ & {$q\ge sn$}           &              \\
        \hline
        Zhang~\emph{et~al.}~2020~\cite{8990088} & $(d-k+h)^{\binom{n}{h}}$              & {$q\ge d-k+n$}        &              \\
        \hline
        Ye~2020~\cite{9137364}                  & $(d-k+h)s^n$                          & {$q\ge sn$}           &              \\
        \hline
        Liu~\emph{et~al.}~2023~\cite{10006831}  & $os^n$                                & {$q\ge sn$}           & $d=k+1$      \\
        \hline
        This paper                              & $(d-k+h)s^{\lceil n/2 \rceil}$        & {$q\ge sn+1$}         &              \\
        \hline
    \end{tabular}
    \vspace*{.1in}
    \caption{{Parameters of different constructions of $(h,d)$-cooperative MSR codes, where $s=d-k+1$ and $o$ is the largest odd number satisfying $o\mid d-k+h$.}}
    \label{tab:my_label}
\end{table*}

\subsection{Our contributions}
In this paper, we present a construction of cooperative MSR codes with all
admissible parameters $(h,d)$ and $\ell=(d-k+h)(d-k+1)^{\lceil n/2 \rceil}$.
{The basic ingredient of our approach is the recent construction of MSR codes in~\cite{li2023msr}, 
which introduced a method to design parity-check sub-matrices using the so-called kernel matrices and blow-up maps.
In this work, we divide the $n$ nodes into $n/2$ groups of size $2$, and
introduce two new types of kernel matrices and then blow up them to construct new $(1,d)$-MSR codes 
with sub-packetization $(d-k+1)^{\lceil n/2 \rceil}$.
Then, similarly to \cite{9137364}, we replicate the $(1,d)$-MSR code $d-k+h$ times obtaining an $(h,d)$-MSR code
with sub-packetization $(d-k+h)(d-k+1)^{\lceil n/2 \rceil}$.
The optimal repair scheme is guaranteed by the deliberately chosen cooperative pairing matrices,
and it is quite different from that of~\cite{9137364}.
}

The rest of this paper is organized as follows: In Section~\ref{sec:Pre}, we
provide the necessary definitions and notations for our construction. In
Section~\ref{sec:Con-MDS}, we present our new construction and prove its MDS
property. In Section~\ref{sec:Rep-h}, we describe the repair scheme of our new
nodes, which achieves the optimal repair bandwidth. 

\section{Preliminaries}\label{sec:Pre}

This section gives some necessary definitions and notations for the paper. Let $\Fq$ be a finite field of order $q$. For a positive integer $m$, we define
$[m]=\{0,1,\cdots,m-1\}$. For a positive integer $m$ and an integer $t$, we
define $$t+[m]=\{t+x:x\in[m]\},$$ and denote the vector $x_{[m]}$ on $\Fq$ as
$(x_0,x_1,\cdots,x_{m-1})$. Let $\mathbf{I}_m$ be the $m\times m$ identity
matrix on $\Fq$. For an element $x\in \Fq$ and a positive integer $t$, we
define a column vector of length $t$ as $$ L^{(t)}(x):=\left[\begin{array}{c}
            1      \\
            x      \\
            x^2    \\
            \vdots \\
            x^{t-1}
        \end{array}
        \right].$$
Assume that $s, t$ are two positive integers.
For each $i\in [s^t]$, we write
$$
    i = \sum_{z\in [t]}i_z s^z,~ i_z\in [s].
$$
Here we use $i_z$ to denote the $z$-th digit in the $t$ digits base-$s$ expansion of $i$.
To simplify notations, we need the matrix operator $\bt$ and the blow-up map introduced in~\cite{li2023msr}.
\begin{definition}\label{def:tensor}
    For a matrix $A$ and an $m\times n$ block matrix $B$  written as
    \begin{equation*}
        B = \begin{bmatrix}
            B_{0,0}   & \cdots & B_{0,n-1}   \\
            \vdots    & \ddots & \vdots      \\
            B_{m-1,0} & \cdots & B_{m-1,n-1}
        \end{bmatrix},
    \end{equation*}
    we define
    \begin{equation*}
        A\bt B := \begin{bmatrix}
            A\otimes B_{0,0}   & \cdots & A\otimes B_{0,n-1}   \\
            \vdots             & \ddots & \vdots               \\
            A\otimes B_{m-1,0} & \cdots & A\otimes B_{m-1,n-1}
        \end{bmatrix},
    \end{equation*}
    where $\otimes$ is the Kronecker product.
    Note that the result $A\bt B$ depends on how the rows and columns of $B$ are partitioned, and we will specify the partition every time we use this notation.
    If every block entry $B_{i,j}$ is a scalar over $\Fq$, we have $A\bt B=B\otimes A$.
\end{definition}

Throughout this paper, when we say that $B$ is a $m\times n$ block matrix, we
always assume that $B$ is uniformly partitioned, i.e., each block entry of $B$
is of the same size.

\begin{definition}[Blow-up]\label{def: Blow-up}
    Let $t$ be a positive integer.
    For any $a\in [t]$, we \emph{blow up} an $s\times s$ block matrix
    $$K=\begin{bmatrix}
            K_{0,0}   & \cdots & K_{0,s-1}   \\
            \vdots    & \ddots & \vdots      \\
            K_{s-1,0} & \cdots & K_{s-1,s-1}
        \end{bmatrix}$$
    to get an $s^{t}\times s^{t}$ block matrix via

    \begin{align*}
        \T{t}{K}{a} & =\mathbf{I}_{s^{t-a-1}}\otimes(\mathbf{I}_{s^a}\bt K)                                                                              \\
                    & =\mathbf{I}_{s^{t-a-1}}\otimes\begin{bmatrix}
                                                        \mathbf{I}_{s^a}\otimes K_{0,0}   & \cdots & \mathbf{I}_{s^a}\otimes K_{0,s-1}   \\
                                                        \vdots                            & \ddots & \vdots                              \\
                                                        \mathbf{I}_{s^a}\otimes K_{s-1,0} & \cdots & \mathbf{I}_{s^a}\otimes K_{s-1,s-1}
                                                    \end{bmatrix}.
    \end{align*}

\end{definition}

The following lemma shows the relationship between an $s\times s$ block matrix
$K$ and its blown-up $s^{t}\times s^{t}$ block matrix $\T{t}{K}{a}$.

\begin{lemma}\label{lem:repre}
    For $i,j\in[s^t]$, the block entry of $\Phi_{t, a}(K)$ at the $i$th block row and $j$th block column

    \begin{align*}
        \T{t}{K}{a}(i,j)=\begin{cases}
                             K(i_a,j_a) & \text{if~} i_z=j_z ~\forall z\in [t]\setminus\{a\} \\
                             \bO        & \text{otherwise},
                         \end{cases}
    \end{align*}
    where $K(i_a,j_a)$ is the block entry of $K$ at the $i_a$th block row and $j_a$th block column.
\end{lemma}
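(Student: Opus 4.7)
The plan is to peel the three nested layers in $\Phi_{t,a}(K)=\mathbf{I}_{s^{t-a-1}}\otimes(\mathbf{I}_{s^a}\bt K)$ one at a time, tracking how an index $i\in[s^t]$ decomposes along its base-$s$ digits. First I would split the digits of $i$ and $j$ into three pieces matching the three layers: write
$$i = i_{\mathrm{hi}}\, s^{a+1} + i_a\, s^a + i_{\mathrm{lo}},\qquad i_{\mathrm{hi}}=\sum_{z=a+1}^{t-1}i_z s^{z-a-1}\in[s^{t-a-1}],\ i_{\mathrm{lo}}=\sum_{z=0}^{a-1}i_z s^z\in[s^a],$$
and similarly for $j$. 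Under the natural identification $[s^t]\cong[s^{t-a-1}]\times[s]\times[s^a]$, the digit $i_{\mathrm{hi}}$ selects a block of the outer $\mathbf{I}_{s^{t-a-1}}$, $i_a$ selects one of the $s\times s$ outer blocks of $K$, and $i_{\mathrm{lo}}$ selects a block of the inner $\mathbf{I}_{s^a}$.

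Next I would verify, layer by layer, that these selections behave as expected. Since $\mathbf{I}_{s^{t-a-1}}\otimes M$ is block-diagonal with $s^{t-a-1}$ copies of $M$, the $(i,j)$-th block of $\Phi_{t,a}(K)$ vanishes unless $i_{\mathrm{hi}}=j_{\mathrm{hi}}$, i.e.\ $i_z=j_z$ for all $z>a$; when matching, it equals the $(i_a s^a+i_{\mathrm{lo}},\, j_a s^a+j_{\mathrm{lo}})$-th block of $\mathbf{I}_{s^a}\bt K$. Applying Definition~\ref{def:tensor} to $\mathbf{I}_{s^a}\bt K$, I pick up the outer block indexed by $(i_a,j_a)$, which is $\mathbf{I}_{s^a}\otimes K(i_a,j_a)$. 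Finally, since $\mathbf{I}_{s^a}\otimes K(i_a,j_a)$ is block-diagonal, this contributes $K(i_a,j_a)$ when $i_{\mathrm{lo}}=j_{\mathrm{lo}}$ (equivalently $i_z=j_z$ for all $z<a$) and $\mathbf{O}$ otherwise. Combining the two matching conditions yields exactly the claim that the block equals $K(i_a,j_a)$ when $i_z=j_z$ for all $z\in[t]\setminus\{a\}$ and $\mathbf{O}$ in all other cases.

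The main obstacle is bookkeeping rather than mathematics: I must be careful that the $\bt$ operator and the ordinary Kronecker product $\otimes$ partition rows and columns differently (one keeps the outer block structure of $K$, the other expands it uniformly), so the base-$s$ digit $i_a$ really sits between $i_{\mathrm{hi}}$ and $i_{\mathrm{lo}}$ rather than at an end. Once the three-way index decomposition and the identification of which layer consumes which digit are pinned down, the rest is a direct appeal to the standard block structure of $A\otimes B$ together with Definition~\ref{def:tensor}.
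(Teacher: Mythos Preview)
Your argument is correct, but it takes a different route from the paper. The paper proves the lemma by induction on $t$: the base case $t=1$ is immediate, and for the inductive step one observes that $\Phi_{t+1,a}(K)=\mathbf{I}_s\otimes\Phi_{t,a}(K)$ when $a\le t-1$ (so the outer $\mathbf{I}_s$ forces the top digit to match and the inductive hypothesis handles the rest), while $\Phi_{t+1,t}(K)=\mathbf{I}_{s^t}\bt K$ is handled directly. Your approach instead unwinds the three layers $\mathbf{I}_{s^{t-a-1}}\otimes(\mathbf{I}_{s^a}\bt K)$ in one shot via the three-way digit decomposition $i\leftrightarrow(i_{\mathrm{hi}},i_a,i_{\mathrm{lo}})$. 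Your method is more direct and makes transparent exactly which digit each layer consumes; the paper's induction is slightly more mechanical but avoids having to keep track of all three pieces simultaneously. Either way the content is the same bookkeeping, and your identification of the main obstacle (that $\bt$ keeps the outer block index of $K$ in the middle position rather than at an end) is exactly the point.
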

\begin{proof}
    We prove this lemma by induction. It is easy to see that the conclusion holds for the case $t=1$.
    Now assume that the conclusion holds for some positive integer $t$ and any $a\in[t]$, that is,
    \begin{align}\label{equ:repre}
        \T{t}{K}{a}(i,j)=\begin{cases}
                             K(i_a,j_a) & \text{if~} i_z=j_z ~\forall z\in [t]\setminus\{a\} \\
                             \bO        & \text{otherwise},
                         \end{cases}
    \end{align} where $i,j\in[s^t]$.

    We proceed to prove the case $t+1$. If $a=t$ then $\T{t+1}{K}{t}=\bI_{s^t}\bt
        K$, and we can verify that
    \begin{align*}
        \T{t+1}{K}{t}(i,j)=\begin{cases}
                               K(i_t,j_t) & \text{if~} i_z=j_z ~\forall z\in [t] \\
                               \bO        & \text{otherwise},
                           \end{cases}
    \end{align*} where $i,j\in[s^{t+1}]$.
    If $0\leq a\leq t-1$, then by definition $\T{t+1}{K}{a}=\bI_{s}\otimes \T{t}{K}{a}$.
    By~\eqref{equ:repre} we get
    \begin{align*}
        \T{t+1}{K}{a}(i,j)=\begin{cases}
                               K(i_a,j_a) & \text{if~} i_z=j_z ~\forall z\in [t+1]\setminus\{a\} \\
                               \bO        & \text{otherwise},
                           \end{cases}
    \end{align*} where $i,j\in[s^{t+1}]$. This concludes the proof.
\end{proof}
The following properties of blown-up matrices will be used for the repair scheme of our codes.

\begin{lemma}\label{lem:exchange}
    Let $A, B$ and $C$ be three $s\times s$ block matrices.
    If $$(\mathbf{I}_s\otimes A)(\mathbf{I}_s \bt B )=(\mathbf{I}_s \bt B )(\mathbf{I}_s\otimes C)\footnote{This condition is equivalent to $\T{2}{A}{0}\T{2}{B}{1}=\T{2}{B}{1}\T{2}{C}{0}.$}$$
    then for any positive integer $t$ and $a_0\neq a_1 \in[t]$,
    $$\T{t}{A}{a_0} \T{t}{B}{a_1}=\T{t}{B}{a_1} \T{t}{C}{a_0}.$$
\end{lemma}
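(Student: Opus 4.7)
The plan is to verify the identity block-by-block using Lemma~\ref{lem:repre}. Writing $A,B,C$ as $s\times s$ block matrices with blocks of size $p\times p$, every blown-up matrix $\T{t}{K}{a}$ is an $s^t\times s^t$ block matrix with $p\times p$ blocks, so multiplication and equality can be checked on these block entries. I will first compute the $(i,j)$-block of each side, then observe that the resulting scalar (actually matrix) identity is exactly the hypothesis read on the block level.

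For $i,j\in[s^t]$, I would write
\[
\bigl(\T{t}{A}{a_0}\T{t}{B}{a_1}\bigr)(i,j)\;=\;\sum_{k\in[s^t]}\T{t}{A}{a_0}(i,k)\,\T{t}{B}{a_1}(k,j).
\]
By Lemma~\ref{lem:repre}, a summand is nonzero only when $i_z=k_z$ for all $z\ne a_0$ and $k_z=j_z$ for all $z\ne a_1$. Since $a_0\ne a_1$, combining the two conditions forces $i_z=j_z$ for every $z\in[t]\setminus\{a_0,a_1\}$ and pins $k$ down uniquely by $k_{a_0}=j_{a_0}$, $k_{a_1}=i_{a_1}$, and $k_z=i_z=j_z$ otherwise. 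Thus
\[
\bigl(\T{t}{A}{a_0}\T{t}{B}{a_1}\bigr)(i,j)=\begin{cases}A(i_{a_0},j_{a_0})\,B(i_{a_1},j_{a_1}) & \text{if } i_z=j_z\ \forall z\in[t]\setminus\{a_0,a_1\},\\ \bO & \text{otherwise.}\end{cases}
\]
The same computation with the roles of the two factors swapped yields
\[
\bigl(\T{t}{B}{a_1}\T{t}{C}{a_0}\bigr)(i,j)=\begin{cases}B(i_{a_1},j_{a_1})\,C(i_{a_0},j_{a_0}) & \text{if } i_z=j_z\ \forall z\in[t]\setminus\{a_0,a_1\},\\ \bO & \text{otherwise.}\end{cases}
\]

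It then remains to check, for all $u,v,w,x\in[s]$, the block-entry identity $A(u,v)\,B(w,x)=B(w,x)\,C(u,v)$. This is precisely what the hypothesis gives: applying the same block-level computation (with $t=2$, $a_0=0$, $a_1=1$) to $(\mathbf{I}_s\otimes A)(\mathbf{I}_s\bt B)=\T{2}{A}{0}\T{2}{B}{1}$ and $(\mathbf{I}_s\bt B)(\mathbf{I}_s\otimes C)=\T{2}{B}{1}\T{2}{C}{0}$ shows that the hypothesis is equivalent, entry-wise, to $A(u,v)B(w,x)=B(w,x)C(u,v)$ for all $u,v,w,x\in[s]$.

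The only delicate point is bookkeeping on the base-$s$ digits and verifying that no summand is lost when one interprets $\T{t}{K}{a}(i,k)$ and $\T{t}{K}{a}(k,j)$ as matrices to be multiplied rather than as scalars. Once the index manipulation is carried out carefully, both sides collapse to the same block expression, and the result follows. An alternative route would be induction on $t$ separating the cases $\{a_0,a_1\}=\{t-1,\cdot\}$ from $\{a_0,a_1\}\subset[t-1]$ and exploiting $\T{t+1}{K}{a}=\mathbf{I}_s\otimes\T{t}{K}{a}$ for $a<t$, but the direct entry computation via Lemma~\ref{lem:repre} is cleaner and avoids splitting into sub-cases.
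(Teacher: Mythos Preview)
Your proposal is correct and follows essentially the same approach as the paper: both compute the $(i,j)$-block of each product via Lemma~\ref{lem:repre}, reduce the sum to a single term by the digit constraints, and then observe that equality of the two products is equivalent to the blockwise identity $A(i_0,j_0)B(i_1,j_1)=B(i_1,j_1)C(i_0,j_0)$, which is exactly the hypothesis read at $t=2$.
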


\begin{proof}
    By Lemma~\ref{lem:repre}, we have
    $$
        \T{t}{A}{a_0}(u,v) = \begin{cases}
            A(u_{a_0},v_{a_0}) & \text{if~}u_i=v_i, \forall i\in[t]\backslash\{a_0\} \\

            \bO                & \text{otherwise},
        \end{cases}
    $$
    $$
        \T{t}{B}{a_1}(u,v) = \begin{cases}
            B(u_{a_1},v_{a_1}) & \text{if~}u_i=v_i,\forall i\in[t]\backslash\{a_1\} \\

            \bO                & \text{otherwise},
        \end{cases}
    $$
    and
    $$
        \T{t}{C}{a_0}(u,v) = \begin{cases}
            C(u_{a_0},v_{a_0}) & \text{if~}u_i=v_i,\forall i\in[t]\backslash\{a_0\} \\

            \bO                & \text{otherwise},
        \end{cases}
    $$
    where $u,v\in[s^t]$.
    We also regard $\T{t}{A}{a_0}\T{t}{B}{a_1}$ and $\T{t}{B}{a_1}\T{t}{C}{a_0}$ as $s^t\times s^t$ block matrices. Note that $a_0\neq a_1$.
    Then by the above, we can verify that
    \begin{align*}
          & [\T{t}{A}{a_0}\T{t}{B}{a_1}](u,v)                       \\
        = & \sum_{w\in[s^t]} \T{t}{A}{a_0}(u,w)\T{t}{B}{a_1}(w,v)   \\
        = & \begin{cases}
                A(u_{a_0},v_{a_0})B(u_{a_1},v_{a_1}) &
                \text{if~}u_i=v_i,\forall i\in[t]\backslash\{ a_0,a_1\}
                \\
                \bO                                  & \text{otherwise},
            \end{cases}
    \end{align*}
    and
    \begin{align*}
          & [\T{t}{B}{a_1}\T{t}{C}{a_0}](u,v)                       \\
        = & \begin{cases}
                B(u_{a_1},v_{a_1})C(u_{a_0},v_{a_0}) &
                \text{if~}u_i=v_i,\forall i\in[t]\backslash\{a_0,a_1\}
                \\
                \bO                                  & \text{otherwise.}
            \end{cases}
    \end{align*}
    Now we can see that
    $$\T{t}{A}{a_0} \T{t}{B}{a_1}=\T{t}{B}{a_1} \T{t}{C}{a_0}$$
    if and only if for any $(i_0,j_0),(i_1,j_1)\in [s]^2$, $$A(i_0,j_0) B(i_1,j_1)=B(i_1,j_1) C(i_0,j_0).$$
    The latter is equivalent to
    $$(\mathbf{I}_s\otimes A)(\mathbf{I}_s \bt B )=(\mathbf{I}_s \bt B )(\mathbf{I}_s\otimes C).$$
    This concludes our proof.
\end{proof}

The following result can be obtained easily by the mixed-product property of
the Kronecker product, therefore we omit its proof.
\begin{lemma}\label{lem:combining}
    Let $A$ and $B$ be two $s\times s$ block matrices.
    Then for any positive integer $t$ and $ a\in[t]$, we have
    $$\T{t}{A}{a}\T{t}{B}{a}=\T{t}{AB}{a}$$
    if $AB$ is a valid matrix product.
\end{lemma}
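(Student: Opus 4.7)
The plan is to unwrap the definition $\T{t}{K}{a}=\mathbf{I}_{s^{t-a-1}}\otimes(\mathbf{I}_{s^a}\bt K)$ and then apply the mixed-product property of the Kronecker product at two different levels. Writing the product as
$$\T{t}{A}{a}\T{t}{B}{a} = \bigl[\mathbf{I}_{s^{t-a-1}}\otimes(\mathbf{I}_{s^a}\bt A)\bigr]\bigl[\mathbf{I}_{s^{t-a-1}}\otimes(\mathbf{I}_{s^a}\bt B)\bigr],$$
the identity $(X\otimes Y)(X\otimes Y') = X\otimes (YY')$ immediately collapses the outer Kronecker factor, leaving us with $\mathbf{I}_{s^{t-a-1}}\otimes\bigl[(\mathbf{I}_{s^a}\bt A)(\mathbf{I}_{s^a}\bt B)\bigr]$.

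The remaining step is to verify that $(\mathbf{I}_{s^a}\bt A)(\mathbf{I}_{s^a}\bt B) = \mathbf{I}_{s^a}\bt (AB)$. By the partition convention, the $(i,j)$-block of $\mathbf{I}_{s^a}\bt A$ is $\mathbf{I}_{s^a}\otimes A_{i,j}$, and similarly for $B$. A single block of the product is then
$$\sum_{k}(\mathbf{I}_{s^a}\otimes A_{i,k})(\mathbf{I}_{s^a}\otimes B_{k,j}) = \mathbf{I}_{s^a}\otimes\Bigl(\sum_{k}A_{i,k}B_{k,j}\Bigr) = \mathbf{I}_{s^a}\otimes (AB)_{i,j},$$
where the first equality is again the mixed-product property. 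This is exactly the $(i,j)$-block of $\mathbf{I}_{s^a}\bt (AB)$, so the inner product simplifies as claimed, and the overall expression becomes $\mathbf{I}_{s^{t-a-1}}\otimes(\mathbf{I}_{s^a}\bt(AB)) = \T{t}{AB}{a}$.

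As a sanity check / alternative route, one can also compute blocks directly via Lemma~\ref{lem:repre}: $[\T{t}{A}{a}\T{t}{B}{a}](u,v) = \sum_{w}\T{t}{A}{a}(u,w)\T{t}{B}{a}(w,v)$ is nonzero only when $u_z = v_z$ for all $z\neq a$, and in that case equals $\sum_{w_a} A(u_a,w_a)B(w_a,v_a) = (AB)(u_a,v_a)$, matching $\T{t}{AB}{a}(u,v)$. The only mild obstacle is keeping the nonstandard $\bt$ operator straight: unlike $\otimes$, it blocks the right-hand argument and tensors each block against the left-hand matrix, so one must check block-multiplication by hand rather than invoking a single tensor identity. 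Once that is done the argument is purely mechanical.
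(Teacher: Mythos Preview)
Your proof is correct and follows exactly the approach the paper indicates: the paper omits the proof entirely, stating only that it ``can be obtained easily by the mixed-product property of the Kronecker product,'' which is precisely what you do at both the outer $\otimes$ level and the inner $\bt$ level. Your block-by-block verification of $(\mathbf{I}_{s^a}\bt A)(\mathbf{I}_{s^a}\bt B)=\mathbf{I}_{s^a}\bt(AB)$ and the alternative route via Lemma~\ref{lem:repre} simply fill in the details the paper leaves implicit.
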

For reader's convenience, we collect the notations used in this paper in Table~\ref{tab:Tab of Notation}.
\begin{table}[!ht]\renewcommand{\arraystretch}{1.5}
    \centering
    {\begin{tabular}{|c|p{6cm}|}
            \hline
            Notation                       & \makecell[c]{Meaning}                                                                                                                                                                                                             \\\hline\hline
            \multicolumn{2}{|c|}{Code parameters}\\\hline
            $n$                           & code length                                                                                                                                                                                                           \\\hline
            $k$                           & code dimension                                                                                                                                                                                                        \\\hline
            $d$                           & repair degree                                                                                                                                                                                                         \\\hline
            $h$                           & the number of failed nodes                                                                                                                                                                                            \\\hline
            $r$                           & $n-k$                                                                                                                                                                                                                 \\\hline
            $s$                           & $d-k+1$                                                                                                                                                                                                               \\\hline
            $\tilde{\ell}$                & $s^{\lceil n/2 \rceil}$                                                                                                                                                                                               \\\hline
            $\ell$                        & $(d-k+h)s^{\lceil n/2 \rceil}$                                                                                                                                                                                        \\\hline
            $\widetilde{\cC}$             & the $(n, k, \tilde\ell)$ MSR code with repair degree $d$                                                                                                                                                              \\\hline
            $\cC$                         & the $(h, d)$ cooperative $(n,k,\ell)$ MSR code                                                                                                                                                                        \\\hline
            $\Fq$                         & the finite field with order $q$                                                                                                                                                                                       \\\hline\hline
            \multicolumn{2}{|c|}{Preliminaries}\\\hline
            $[m]$                         & $\{0,1,\cdots,m-1\}$                                                                                                                                                                                                  \\\hline
            $t+[m]$                       & $\{t,t+1,\cdots, t+m-1\}$                                                                                                                                                                                             \\\hline
            $x_{[m]}$                     & $(x_0,\cdots,x_{m-1})$                                                                                                                                                                                                \\\hline
            $L^{(t)}(x)$                  & $( 1 \quad x \quad \cdots \quad x^{t-1} )^T$                                                                                                                                                                          \\\hline
            $\bt$                         & block Kronecker product in Definition~\ref{def:tensor}                                                                                                                                                               \\\hline
            $\oplus, \oplus_s$            & additions of $\mod 2$ and $\mod s$                                                                                                                                                                                    \\\hline
            $\T{t}{\cdot}{a}$             & blow-up transformation in Definition~\ref{def: Blow-up}                                                                                                                                                              \\\hline
            $\mathbf{1}^{(s)}$            & a length-$s$ all-one column vector                                                                                                                                                                                    \\\hline
            $\cK^{(t)}(x_{[s]})$          & $\mathbf{1}^{(s)}\bt \left [
            L^{(t)}(x_0) \quad  L^{(t)}(x_1) \quad \cdots \quad  L^{(t)}(x_{s-1})\right ]$                                                                                                                                                                        \\\hline
            $\R{\cdot}$ & rotation map in~\eqref{eq:rot}
            \\\hline\hline
            \multicolumn{2}{|c|}{Construction}\\\hline
            $\lambda_{[sn]},\gamma\in \Fq$& $sn+1$ elements satisfying local constraints~\eqref{eq:flr}                                                                                                                        \\\hline
            $L_i^{(t)}$                   & $L^{(t)}(\lambda_i)$                                                                                                                                                                                                  \\\hline
            $U_0, U_1, V_0, V_1$          & 4 $s\times s$ matrices \\\hline
            $a\in [\frac n2]$ & group index\\\hline
            $b\in [2]$ & in-group index\\\hline
            $K_{a,b}^{(t)}$               & kernel matrix in~\eqref{eq:kernel}                                                                             \\\hline
            $\widetilde H_{2a+b}$         & 
            the parity-check sub-matrix of $\widetilde C_{2a+b}$ in code $\widetilde \cC$                                                                                                                        \\\hline
            $ H_{2a+b}$                   & 
            the parity-check sub-matrix of $ C_{2a+b}$ in code $ \cC$                                                                                                                               \\\hline\hline
            \multicolumn{2}{|c|}{Cooperative repair}\\\hline
            $\cF$                         & the index set of $h$ failed nodes                                                                                                                                                                                \\\hline
            $\cH$                         & the index set of $d$ helper nodes                                                                                                                                                                 \\\hline
            $\cR_i^{\cF}$                 & the repair matrix for node $i\in\cF$                                                                                                                                                                                   \\\hline
            $\hat{i}$                     & the index of $i$ in $\cF$                                                                                                                                                                                             \\\hline
        \end{tabular}}
    \vspace*{.1in}
    \caption{Notations}
    \label{tab:Tab of Notation}
\end{table}
\section{Code construction and MDS property}\label{sec:Con-MDS}

Given code length $n$, dimension $k$, and repair degree $d$, we use $r = n-k$
to denote the redundancy of our code and set $s = d-k+1$. Assume that the
number of failed nodes $h$ satisfies that $k+1 \le d \le n-h.$ In this section,
we construct an $(n,k, \ell = (d-k+h)s^{\lceil n/2 \rceil})$ cooperative MSR
code with repair degree $d$ for any $h$ failed nodes. Without loss of
generality, we always assume that $2 \vert n$. Then $\ell = (d-k+h)s^{ n/2 }$
and we write $\tilde{\ell}=s^{ n/2 }$. The codeword $(C_0,C_1,\cdots,C_{n-1})$
of the $(n,k,\ell)$ array code is divided into $n/2$ groups of size $2$. We use
$a\in[n/2],b\in[2]$ to denote the group's index and the node's index within its
group, respectively. In other words, the group $a$ consists of the two nodes
$C_{2a}$ and $C_{2a+1}$.

To begin with, we select $sn$ distinct elements $\lambda_{[sn]}$ from $\Fq$ and
define the following \emph{kernel map} $$\cK^{(t)}:\Fq^s \to \Fq^{st\times
        s},$$ which maps $x_{[s]}$ to the following $s\times s$ block matrix
\begin{align*}
    \cK^{(t)}(x_{[s]}) & =\mathbf{1}^{(s)}\bt[
    L^{(t)}(x_0) \ L^{(t)}(x_1) \ \cdots \ L^{(t)}(x_{s-1})]                                                \\
                       & =\left[\begin{array}{cccc}
                                        L^{(t)}(x_0) & L^{(t)}(x_1) & \cdots & L^{(t)}(x_{s-1}) \\
                                        \vdots       & \vdots       & \ddots & \vdots           \\
                                        L^{(t)}(x_0) & L^{(t)}(x_1) & \cdots & L^{(t)}(x_{s-1}) \\
                                    \end{array}\right].
\end{align*}
where $\mathbf{1}^{(s)}$ is the all-one column vector of length $s$.
\begin{definition}
    We say a matrix is \emph{entrywise non-zero} if it has no zero entry.
    Given two entry-wise non-zero matrices $U,V\in\Fq^{s\times s}$, we call them \emph{cooperative pairing} matrices if
    $UV=\bI_s$.
\end{definition}
The cooperative pairing matrices will play a pivotal role in our cooperative repair scheme of Section~\ref{sec:Rep-h}.
Now we provide a simple method to obtain cooperative pairing (circulant)
matrices. We first need the following useful map
\begin{equation}
    \label{eq:rot}
    \begin{array}{rccl}
        \R{\cdot}: & \Fq[x]/(x^s-1)           & \to     & \Fq^{s\times s}                                         \\\\
                   & \sum_{i=0}^{s-1} c_i x^i & \mapsto & \begin{bmatrix}
                                                              c_0     & c_1    & \cdots & c_{s-1} \\
                                                              c_{s-1} & c_0    & \cdots & c_{s-2} \\
                                                              \vdots  & \vdots & \ddots & \vdots  \\
                                                              c_1     & c_2    & \cdots & c_0
                                                          \end{bmatrix}
    \end{array},
\end{equation}
which maps a polynomial to a circulant matrix. Then the following lemma shows us how to find cooperative pairing (circulant) matrices.

\begin{lemma}\label{lem:cooppairmat}
    Choose some element $\gamma\in\Fq$ such that $g(\gamma) = \gamma(\gamma-1)(\gamma+s-1)(\gamma+s-2)\neq 0.$
    Set
    \begin{align*}
        F_0 & =x^{s-1}+\cdots+x+\gamma, \\F_1&=\frac{x^{s-1}+\cdots+x-(\gamma+s-2)}{-(\gamma-1)(\gamma+s-1)}.
    \end{align*}
    Then $F_0F_1 = 1$ in $\Fq[x]/(x^s-1)$ and $\mathbf{rot}(F_0), \mathbf{rot}(F_1)$ are cooperative pairing matrices.
\end{lemma}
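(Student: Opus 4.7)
The plan is to leverage the fact that the map $\mathbf{rot}:\Fq[x]/(x^s-1)\to\Fq^{s\times s}$ is an $\Fq$-algebra homomorphism whose image is the algebra of circulant matrices. In particular $\mathbf{rot}(F_0)\mathbf{rot}(F_1)=\mathbf{rot}(F_0F_1)$, so once the polynomial identity $F_0F_1\equiv 1\pmod{x^s-1}$ is established, the matrix equation $\mathbf{rot}(F_0)\mathbf{rot}(F_1)=\bI_s$ comes for free, and the cooperative-pairing property reduces to the separate entrywise-non-zero check on the coefficient vectors of $F_0$ and $F_1$.

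For the polynomial identity, I would introduce the all-ones polynomial $P(x):=1+x+\cdots+x^{s-1}$ and first prove the single key congruence $P(x)^2\equiv sP(x)\pmod{x^s-1}$. This is immediate since $xP(x)=P(x)+(x^s-1)\equiv P(x)$, hence $x^iP(x)\equiv P(x)$ for every $i\in[s]$, and summing these $s$ congruences yields $P(x)\cdot P(x)\equiv sP(x)$. Now rewrite $F_0=P(x)+(\gamma-1)$ and $\alpha F_1=P(x)-(\gamma+s-1)$ with $\alpha:=-(\gamma-1)(\gamma+s-1)$. A direct expansion gives $\alpha F_0F_1=P(x)^2-sP(x)-(\gamma-1)(\gamma+s-1)$, which collapses modulo $x^s-1$ to $-(\gamma-1)(\gamma+s-1)=\alpha$; dividing by $\alpha$ produces $F_0F_1\equiv 1$ as required.

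For the entrywise-non-zero claim, I would use that the entries of $\mathbf{rot}(\sum_i c_ix^i)$ are precisely the coefficients $c_0,\ldots,c_{s-1}$ (each appearing $s$ times in the circulant), so the matrix has no zero entry if and only if every coefficient is nonzero. The coefficients of $F_0$ are $\gamma$ and $s-1$ copies of $1$, requiring only $\gamma\neq 0$. For $F_1$, the denominator must be nonzero ($\gamma\neq 1$ and $\gamma\neq 1-s$), the non-constant coefficients $1/\alpha$ are then automatically nonzero, and the constant coefficient $-(\gamma+s-2)/\alpha$ is nonzero exactly when $\gamma\neq 2-s$. The four forbidden values $\{0,1,1-s,2-s\}$ are precisely the roots of $g$, so the hypothesis $g(\gamma)\neq 0$ captures all of them simultaneously.

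There is no real obstacle: the entire content of the lemma is driven by the single identity $P(x)^2\equiv sP(x)\pmod{x^s-1}$, after which both the inverse computation and the coefficient check unfold mechanically. The only conceptual step worth emphasizing is that routing the multiplication through the quotient ring $\Fq[x]/(x^s-1)$, rather than multiplying the circulant matrices entrywise, makes the algebra essentially trivial.
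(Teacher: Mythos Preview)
Your proposal is correct and is essentially the same approach as the paper, which simply writes ``By direct computations, we can easily check that $F_0F_1=1$ and $\mathbf{rot}(F_0)\mathbf{rot}(F_1)=\bI_s$'' without further detail. Your argument via the identity $P(x)^2\equiv sP(x)\pmod{x^s-1}$ and the rewriting $F_0=P+(\gamma-1)$, $\alpha F_1=P-(\gamma+s-1)$ is exactly the clean way to carry out that direct computation, and your coefficient analysis for the entrywise-non-zero condition correctly identifies the four excluded values as the roots of $g$.
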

\begin{proof}
    As $g(\gamma)\neq 0$ we can verify that the matrices $\mathbf{rot}(F_0)$ and $\mathbf{rot}(F_1)$ are entrywise non-zero. By direct computations, we can easily check that $F_0F_1 = 1$ and $\mathbf{rot}(F_0)\mathbf{rot}(F_1)=\bI_s$.
\end{proof}
From now on we set
$$
    \begin{array}{ll}
        U_0=\bI_s,   & U_1=  \R{F_1}, \\
        V_0=\R{F_0}, & V_1=  \bI_s,
    \end{array}
$$
where $\mathbf{rot}(F_0)$ and $\mathbf{rot}(F_1)$ are defined as in Lemma~\ref{lem:cooppairmat}.
We can check that \footnote{For any integers $a$ and $b$, the operation $\oplus_s$ is defined as $a\oplus_s b=(a+b) \bmod{s}$. And we use $\oplus$ as a shorthand for $\oplus_2$.
}
$$\begin{array}{lcl}
        U_bV_b & = & \R{F_b}, \\U_bV_{b\oplus 1}&=&\bI_s,
    \end{array}$$
for all $b\in[2]$.

Now, we are ready to define the following \emph{kernel matrices}. For $a\in
    [n/2], b\in [2]$ and a positive integer $t$, we define
\begin{equation}\label{eq:kernel}
    \begin{aligned}
        K_{a,b}^{(t)} & =(V_b\otimes \mathbf{1}^{(t)})\odot \cK^{(t)}(\lambda_{s(2a+b)+[s]})
    \end{aligned}
\end{equation}
where $\odot$ is the Hadamard (elementwise) product of two matrices.
Then, for a nonempty subset $B\subseteq [2]$, we define the horizontal concatenation matrix
$$
    K_{a,B}^{(t)} = [K_{a, b}^{(t)} : b\in B].
$$
Next, we blow up the kernel matrix to get
$$M_{a,b}^{(t)} = \Tg{K_{a,b}^{(t)}}{a} = \mathbf{I}_{s^{\frac{n}{2}-a-1}}\otimes(\mathbf{I}_{s^a}\bt K_{a,b}^{(t)}).$$
Similarly, we define $M_{a,B}^{(t)}$ as that of $K_{a,B}^{(t)}$.
Following that, we define
\begin{align*}
      & f(x_{[2s]}, \gamma)                                                                                                                                                                                   \\
    = & \det\left[
        (V_0\otimes \mathbf{1}^{(2)})\odot\cK^{(2)}(x_{[s]})\ \ (V_1\otimes \mathbf{1}^{(2)})\odot\cK^{(2)}(x_{s+[s]})
    \right]                                                                                                                                                                                                   \\
    = & \scalebox{0.75}{$\det {\setlength{\arraycolsep}{0.05pt}\begin{bmatrix}
                                                      \gamma L^{(2)}(x_0) & L^{(2)}(x_1)        & \cdots & L^{(2)}(x_{s-1})        & L^{(2)}(x_s) &                  &        &                   \\
                                                      L^{(2)}(x_0)        & \gamma L^{(2)}(x_1) & \cdots & L^{(2)}(x_{s-1})        &              & L^{(2)}(x_{s+1}) &        &                   \\
                                                      \vdots              & \vdots              & \ddots & \vdots                  &              &                  & \ddots &                   \\
                                                      L^{(2)}(x_0)        & L^{(2)}(x_1)        & \cdots & \gamma L^{(2)}(x_{s-1}) &              &                  &        & L^{(2)}(x_{2s-1})
                                                  \end{bmatrix}}$}.
\end{align*}
To guarantee the MDS property and the optimal repair scheme, we further require
the $sn$ distinct elements $\lambda_{[sn]}$ and $\gamma$ to satisfy
\begin{align}\label{eq:flr}
    g(\gamma)\cdot \Pi_{a\in [n/2]}f(\lambda_{2sa+[2s]}, \gamma)\neq 0.
\end{align}
The following result guarantees the existence of such elements in some linear fields.
    {
        \begin{lemma}\label{lem:linearfield}
            If $q\ge sn+1$, then in $\Fq$ we can always find an element $\gamma$ and $sn$ distinct elements $\lambda_{[sn]}$ satisfying \eqref{eq:flr}.
        \end{lemma}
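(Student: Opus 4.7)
The plan is to first choose $\gamma$ avoiding the (at most four) roots of $g$, and then to select $\lambda_0,\ldots,\lambda_{sn-1}$ sequentially by a greedy polynomial argument. Since $g(\gamma)$ is a polynomial in $\gamma$ of degree $4$, it has at most $4$ roots in $\Fq$. Because $s = d - k + 1 \geq 2$ and $n \geq 2$, the hypothesis $q \geq sn + 1$ gives $q \geq 5$, so a valid $\gamma$ exists.

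Two properties of $f(x_{[2s]},\gamma)$ will then be needed. First, $f$ has degree at most $1$ in each of the $2s$ variables: each $x_j$ for $j \in [s]$ appears only in the $j$-th column of the determinant matrix, with entries of the form $V_0(i,j)\,x_j^b$, and analogously $x_{s+j}$ appears only in column $s+j$ via $V_1 = \bI_s$. Second, $f$ is not identically zero whenever $g(\gamma) \neq 0$. I will prove the latter by specializing $x_{s+j} = 0$ for all $j \in [s]$: each of the last $s$ columns then has a single nonzero entry (a $1$ in row $2j$ of column $s+j$), so Laplace expansion along these columns collapses $f$ to $\pm\det(V_0)\cdot x_0 x_1 \cdots x_{s-1}$; Lemma~\ref{lem:cooppairmat} ensures $\R{F_0}\R{F_1} = \bI_s$, hence $V_0 = \R{F_0}$ is invertible and this specialization is nonzero. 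I expect this non-vanishing argument to be the main technical step.

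With $\gamma$ fixed, the $\lambda_i$'s are built inductively while maintaining the invariant that after $\lambda_0,\ldots,\lambda_{i-1}$ have been fixed, the polynomial obtained from $f(x_{2sa+[2s]},\gamma)$ (with $a = \lfloor i/(2s) \rfloor$ the group containing index $i$) by substituting the values chosen so far is not identically zero in its remaining variables. The enabling observation is that if $P$ is a nonzero polynomial over $\Fq$ with per-variable degree at most $1$, and we write $P = y A + B$ in any variable $y$ with $A, B$ in the other variables, then $P|_{y = v}$ is identically zero for at most one $v \in \Fq$: if $A = 0$ then $P = B \neq 0$ and every $v$ works; otherwise any nonzero monomial coefficient of $A$ pins down $v$ uniquely. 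Hence at step $i$ we must avoid the $i$ previously used values plus at most one bad $v$, so $q \geq i+2$ suffices, and the worst case $i = sn - 1$ yields exactly the stated hypothesis $q \geq sn + 1$.
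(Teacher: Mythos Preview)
Your argument is correct, and it takes a genuinely different route from the paper's own proof. The paper proceeds in the \emph{opposite} order: it first fixes $\lambda_i = \omega^i$ for a primitive element $\omega$ (these are automatically distinct since $q \geq sn+1$), then uses a homogeneity identity to reduce all the factors $f(\lambda_{2sa+[2s]},\gamma)$ to a common scalar multiple of $f(\lambda_{[2s]},\gamma)$, and finally performs an explicit block row reduction to show that $f(\lambda_{[2s]},\gamma)$ is a nonzero polynomial of degree $s$ in $\gamma$; since $g(\gamma)f(\lambda_{[2s]},\gamma)$ then has degree at most $s+4 < q$, a suitable $\gamma$ exists.

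Your approach fixes $\gamma$ first and then builds the $\lambda_i$'s greedily. The two structural facts you isolate --- that $f$ has degree at most one in each variable, and that the specialization $x_{s+j}=0$ collapses $f$ to $\pm\det(V_0)\,x_0\cdots x_{s-1}$ with $\det(V_0)\neq 0$ whenever $g(\gamma)\neq 0$ --- replace the paper's explicit determinant computation and scaling identity. Your proof is more elementary in that it avoids the row reduction entirely; the paper's proof, on the other hand, produces an explicit choice of the $\lambda_i$'s (powers of a primitive element), which may be preferable for implementation. Both arguments hit the bound $q\geq sn+1$ exactly, yours at the last greedy step $i=sn-1$, the paper's via $q-1\geq sn > s+4$ (using $n\geq 3$).
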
}
\begin{proof}
    By $k+1 \leq d \leq n-h$, we have $n\geq k+1+h\geq 3$ because of $k\geq 1$ and $h\geq 1$. Let $\omega$ be a primitive element of $\Fq$ with $q\geq sn+1$. Then we set $\lambda_i=\omega^{i}$ for $0\leq i\leq sn-1$.
    We substitute these values and can observe that
    \begin{align*}
        f(\lambda_{2sa+[2s]},\gamma)=\omega^{2s^2a}f(\lambda_{[2s]},\gamma), ~0\leq a\leq n/2-1.
    \end{align*}
    Write
    $$P=\begin{bmatrix}
            \begin{matrix} 1&0\end{matrix}          &                                             &        &                                              \\
                                                    & \begin{matrix} 1&0\end{matrix}              &        &                                              \\
                                                    &                                             & \ddots &                                              \\
                                                    &                                             &        & \begin{matrix} 1&0\end{matrix}               \\
            \begin{matrix} -\lambda_s&1\end{matrix} &                                             &        &                                              \\
                                                    & \begin{matrix} -\lambda_{s+1}&1\end{matrix} &        &                                              \\
                                                    &                                             & \ddots &                                              \\
                                                    &                                             &        & \begin{matrix} -\lambda_{2s-1}&1\end{matrix} \\
        \end{bmatrix},
    $$
    and
    $$
        Q=\begin{bmatrix}
            \gamma(\lambda_0-\lambda_s) & \lambda_1-\lambda_s             & \cdots & \lambda_{s-1}-\lambda_s              \\
            \lambda_0-\lambda_{s+1}     & \gamma(\lambda_1-\lambda_{s+1}) & \cdots & \lambda_{s-1}-\lambda_{s+1}          \\
            \vdots                      & \vdots                          & \ddots & \vdots                               \\
            \lambda_0-\lambda_{2s-1}    & \lambda_1-\lambda_{2s-1}        & \cdots & \gamma(\lambda_{s-1}-\lambda_{2s-1})
        \end{bmatrix}.
    $$
    We can check that
    \begin{align*}
        &P\left[
            (V_0\otimes \mathbf{1}^{(2)})\odot\cK^{(2)}(x_{[s]})\ \ (V_1\otimes \mathbf{1}^{(2)})\odot\cK^{(2)}(x_{s+[s]})
            \right]\\=& \left[\begin{array}{c|c}
                \R{F_0} & \bI_s \\\hline
                Q       & \bO
            \end{array}\right].
    \end{align*}

    Hence
    \begin{align*}
          & f(\lambda_{[2s]},\gamma)=\det(P)^{-1}\det \left[\begin{array}{c|c}
                                                                    \R{F_0} & \bI_s \\\hline
                                                                    Q       & \bO
                                                                \end{array}\right]                                                                        \\
        = & (-1)^{\frac{s(s+1)}{2}}\scalebox{0.85}{$\det{\begin{bmatrix}
                                                \gamma(1-\omega^s) & \omega-\omega^s             & \cdots & \omega^{s-1}-\omega^s              \\
                                                1-\omega^{s+1}     & \gamma(\omega-\omega^{s+1}) & \cdots & \omega^{s-1}-\omega^{s+1}          \\
                                                \vdots             & \vdots                      & \ddots & \vdots                             \\
                                                1-\omega^{2s-1}    & \omega-\omega^{2s-1}        & \cdots & \gamma(\omega^{s-1}-\omega^{2s-1})
                                            \end{bmatrix}}$}
        \\[1.5em]= & (-1)^{\frac{s(s+1)}{2}}\omega^{\frac{s(s-1)}{2}} \scalebox{0.85}{$\det \begin{bmatrix}
                \gamma(1-\omega^s) & 1-\omega^{s-1}       & \cdots & 1-\omega           \\
                1-\omega^{s+1}     & \gamma(1-\omega^{s}) & \cdots & 1-\omega^{2}       \\
                \vdots             & \vdots               & \ddots & \vdots             \\
                1-\omega^{2s-1}    & 1-\omega^{2s-2}      & \cdots & \gamma(1-\omega^s) \\
            \end{bmatrix}$}.
    \end{align*}

    If we regard $f( \lambda_{[2s]},\gamma)$ as a polynomial in $\Fq[\gamma]$, then
    $\deg(f)=s$. Write $F(\gamma)=g(\gamma)f(\lambda_{[2s]},\gamma)$. Note that the
    condition~\eqref{eq:flr} is equivalent to $F(\gamma)\neq 0$. We see that
    $F(\gamma)$ is a non-zero polynomial in $\gamma$ with degree at most $s+4$. As
    $q\geq sn+1$, we can find an element in $\Fq$ such that $F(\gamma)$ is
    non-zero, and we assign it to $\gamma$. This concludes our proof.
\end{proof}
From now on, let $\Fq$ be a finite field with $q\geq sn+1$.
Then by Lemma~\ref{lem:linearfield} we can select
one element $\gamma$ and $sn$ distinct elements $\lambda_{[sn]}$ that satisfy~\eqref{eq:flr} from $\Fq$.

Now we write $L^{(t)}_i=L^{(t)}(\lambda_i)$. Then we have the following.
\begin{lemma}\label{lem:VMaB}
    Suppose that $a\in [n/2]$, $B\subseteq[2]$ is a nonempty set of size $t$.
    For any integer $m>t$, there exists an $\tilde{\ell} m\times \tilde{\ell} m$ matrix $V$ such that:
    \begin{itemize}
        \item [(i)]
              \begin{align*}
                  V M_{a,B}^{(m)} = \left[\begin{array}{c}
                                                  M_{a,B}^{(t)} \\
                                                  \bO
                                              \end{array}\right]
              \end{align*}
              where $\bO$ is the $\tilde\ell(m-t)\times \tilde\ell t$ all-zero matrix.
        \item [(ii)] For any $c\in [n/2]\setminus \{a\}, d\in [2]$,
              \begin{align*}
                  V M_{c,d}^{(m)} = \left[\begin{array}{l}
                                                  M_{c,d}^{(t)} \\
                                                  \widehat M_{c,d}^{(m-t)}
                                              \end{array}\right]
              \end{align*}
              where $\widehat{M}_{c,d}^{(m-t)}$ is an $\tilde\ell(m-t)\times \tilde\ell $
              matrix which is column equivalent to ${M}_{c,d}^{(m-t)}$.
        \item [(iii)]
              If $a\neq n/2-1$, for any $\lambda_{i_0},\cdots, \lambda_{i_{s-1}}\notin \{\lambda_{s(2a+b)+x}: b\in B,x\in[s]\}$,
              \begin{align*}
                    & V(\bI_{\tilde\ell/s}\bt \blkdiag(L_{i_0}^{(m)},\cdots, L_{i_{s-1}}^{(m)})                                                 \\
                  = & \left[\begin{array}{l}
                                    ~\bI_{\tilde\ell/s}\bt \blkdiag(L_{i_0}^{(t)},\cdots, L_{i_{s-1}}^{(t)}) \\
                                    (\bI_{\tilde\ell/s}\bt \blkdiag(L_{i_0}^{(m-t)},\cdots, L_{i_{s-1}}^{(m-t)}){\Lambda}
                                \end{array}\right]
              \end{align*}
              \footnote{Given matrices $A_i, i\in [s]$, $\blkdiag(A_i:i\in [s])$ is the block diagonal matrix obtained by aligning the matrices $A_i,i\in [s]$ along the diagonal.}where {$\Lambda$}~is an $\tilde\ell\times \tilde\ell$ invertible matrix.
    \end{itemize}
\end{lemma}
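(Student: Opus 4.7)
The approach is to construct $V$ explicitly in the form $V = P\cdot\Tg{\tilde V}{a}$, where $\tilde V$ is an $s\times s$ block matrix with blocks of size $m\times m$ (so $\tilde V$ is $sm\times sm$, and the blow-up $\Tg{\tilde V}{a}$ has size $\tilde\ell m\times \tilde\ell m$), and $P$ is a row permutation that gathers all rows corresponding to ``power in $[t]$'' within each length-$m$ block into the top $\tilde\ell t$ positions and the remaining rows into the bottom $\tilde\ell(m-t)$ positions.

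First I would split $\tilde V=\left[\begin{smallmatrix}\tilde V^{\mathrm{top}}\\ \tilde V^{\mathrm{bot}}\end{smallmatrix}\right]$ with $\tilde V^{\mathrm{top}}$ of size $st\times sm$ and $\tilde V^{\mathrm{bot}}$ of size $s(m-t)\times sm$. A natural choice for the top part is the block-diagonal projector $\tilde V^{\mathrm{top}}=\bI_s\otimes[\bI_t\ \bO]$: since $L^{(m)}(\lambda)$ begins with $L^{(t)}(\lambda)$, this immediately yields $\tilde V^{\mathrm{top}}K_{a,b}^{(m)}=K_{a,b}^{(t)}$ and $\tilde V^{\mathrm{top}}L_{i_j}^{(m)}=L_{i_j}^{(t)}$ by linearity. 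For the bottom, the task is to annihilate every column of $K_{a,B}^{(m)}$; each such column has the tensor form $V_b(:,j)\otimes L^{(m)}(\lambda_{s(2a+b)+j})$, and the orthogonal complement of these $st$ columns inside $\Fq^{sm}$ has dimension $sm-st=s(m-t)$, exactly matching the row budget. A concrete construction is a suitable mixture of the cooperative-pairing blocks $U_0,U_1$ together with Vandermonde-annihilator rows built from a polynomial $\pi$ vanishing on the bad set $\{\lambda_{s(2a+b)+x}:b\in B,\,x\in[s]\}$; condition~\eqref{eq:flr} is what guarantees that these $st$ columns are linearly independent and that $\pi$ exists with the needed non-vanishing behavior on helper $\lambda$'s.

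Given such a $\tilde V$, property~(i) follows from Lemma~\ref{lem:combining}: $VM_{a,B}^{(m)}=P\,\Tg{\tilde V K_{a,B}^{(m)}}{a}$, and $\tilde V K_{a,B}^{(m)}=\left[\begin{smallmatrix}K_{a,B}^{(t)}\\ \bO\end{smallmatrix}\right]$ by construction, whose blow-up followed by $P$ produces $\left[\begin{smallmatrix}M_{a,B}^{(t)}\\ \bO\end{smallmatrix}\right]$. Property~(ii) is obtained by a direct block computation using Lemma~\ref{lem:repre}: since $a\ne c$, block $(u,w)$ of $V M_{c,d}^{(m)}$ is nonzero only when $u_z=w_z$ for every $z\in[n/2]\setminus\{a,c\}$, and there equals $\tilde V(u_a,w_a)\,K_{c,d}^{(m)}(u_c,w_c)$. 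The top part of $\tilde V$, being block-diagonal, forces $u_a=w_a$ and collapses these blocks to $K_{c,d}^{(t)}(u_c,w_c)$ supported on $\{u_z=w_z\,\forall z\ne c\}$, which is precisely $M_{c,d}^{(t)}$; the bottom part, built from $\tilde V^{\mathrm{bot}}$, assembles into a matrix whose column span agrees with that of $M_{c,d}^{(m-t)}$, giving the column-equivalent $\widehat M_{c,d}^{(m-t)}$. Property~(iii) is analogous: $\tilde V^{\mathrm{top}}$ reduces each diagonal $L^{(m)}_{i_j}$ to $L^{(t)}_{i_j}$ coordinatewise, while $\tilde V^{\mathrm{bot}}$ produces $\pi(\lambda_{i_j})L^{(m-t)}(\lambda_{i_j})$ on the diagonal with the scalars $\pi(\lambda_{i_j})$ nonzero by the hypothesis on the $\lambda_{i_j}$; these scalars combined with the cooperative-pairing twist form the invertible $\tilde\ell\times\tilde\ell$ matrix $Q$.

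The main obstacle will be the explicit construction of $\tilde V^{\mathrm{bot}}$. The row budget $s(m-t)$ equals exactly the codimension of the $st$-dimensional span of the failed columns, so there is no slack: the design must simultaneously annihilate this span, produce a genuinely column-equivalent image of $M_{c,d}^{(m-t)}$ for every helper block $c\ne a$, and yield an invertible $Q$ in~(iii). Weaving together the cooperative-pairing identities $U_bV_b=\R{F_b}$ and $U_bV_{1-b}=\bI_s$ with polynomial interpolation under the non-degeneracy constraint~\eqref{eq:flr} is the technical heart of the lemma.
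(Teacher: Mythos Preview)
The paper does not give a proof of this lemma at all; it simply asserts that the argument is ``exactly the same as that of \cite[Lemma~3, Lemma~7]{li2023msr}'' and omits the details. So there is no in-paper proof to compare against, only the pointer to the earlier construction.

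Your plan is the natural one and almost certainly coincides with what \cite{li2023msr} does: take $V=P\cdot\Tg{\tilde V}{a}$ with $\tilde V=\begin{bmatrix}\tilde V^{\mathrm{top}}\\\tilde V^{\mathrm{bot}}\end{bmatrix}$, where $\tilde V^{\mathrm{top}}=\bI_s\otimes[\,\bI_t\ \bO\,]$ projects each length-$m$ Vandermonde block to its first $t$ entries, and $\tilde V^{\mathrm{bot}}$ annihilates the $st$ columns of $K_{a,B}^{(m)}$. One concrete realization (for $t=1$, $B=\{b_0\}$) is $\tilde V^{\mathrm{bot}}=\blkdiag(A_0,\dots,A_{s-1})(V_{b_0}^{-1}\otimes\bI_m)$, where $A_p$ is the $(m-1)\times m$ matrix encoding division by $(x-\lambda_{s(2a+b_0)+p})$; the inverse $V_{b_0}^{-1}$ exists precisely because of the cooperative-pairing relations, and it plays exactly the role you attribute to the ``twist.'' With this choice one checks directly that $\tilde V^{\mathrm{bot}}(u_a,v_a)L^{(m)}(\mu)$ is a nonzero scalar times $L^{(m-1)}(\mu)$ whenever $\mu$ avoids the bad set, which is what drives both the column equivalence in~(ii) and the invertibility of $Q$ in~(iii). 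The case $t=2$ is handled by stacking two such annihilation layers. So your outline is sound; the only thing left is to write down $\tilde V^{\mathrm{bot}}$ explicitly along these lines and carry out the block-entry verification via Lemma~\ref{lem:repre}, which is mechanical once the form of $\tilde V^{\mathrm{bot}}$ is fixed.
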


\begin{lemma}\label{lem:global}
    For any $z$ distinct integers $a_0,a_1,\cdots,a_{z-1}\in[n/2]$ and any $z$ nonempty subsets $B_0,B_1,\cdots,B_{z-1}\subseteq[2]$ satisfying $|B_0|+|B_1|+\cdots+|B_{z-1}|=m\leq r$, we have $$\det\begin{bmatrix}
            M_{a_0,B_0}^{(m)} & M_{a_1,B_1}^{(m)} & \cdots & M_{a_{z-1},B_{z-1}}^{(m)}
        \end{bmatrix}     \neq 0 .$$
\end{lemma}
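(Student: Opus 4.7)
The plan is to induct on $z$, using Lemma~\ref{lem:VMaB} to peel off the first block column at each step.

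For the \textbf{base case} $z=1$, the restriction $B_0\subseteq[2]$ forces $|B_0|=m\in\{1,2\}$. Because the outer Kronecker factor $\mathbf{I}_{s^{n/2-a_0-1}}$ and the $\mathbf{I}_{s^{a_0}}\bt\cdot$ factor merely replicate/permute rows and columns, $\det(M_{a_0,B_0}^{(m)})$ equals, up to sign, a positive power of $\det(K_{a_0,B_0}^{(m)})$, so it suffices to treat the kernel matrix. For $m=1$ we have $K_{a_0,\{b_0\}}^{(1)}=V_{b_0}$, which is invertible by Lemma~\ref{lem:cooppairmat}. For $m=2$ we recognize $K_{a_0,[2]}^{(2)}$ as exactly the matrix whose determinant is $f(\lambda_{2sa_0+[2s]},\gamma)$; this is nonzero thanks to the choice~\eqref{eq:flr}.

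For the \textbf{inductive step} $z\ge 2$, set $t_0=|B_0|$. Since each $|B_i|\ge 1$, we have $1\le t_0<m$, so Lemma~\ref{lem:VMaB} applies with $a=a_0$ and $B=B_0$; let $V$ denote the resulting (invertible) matrix. Applying parts~(i) and~(ii) to each block column gives
\[
V\,\bigl[\,M_{a_0,B_0}^{(m)}\,\big|\,M_{a_1,B_1}^{(m)}\,\big|\,\cdots\,\big|\,M_{a_{z-1},B_{z-1}}^{(m)}\,\bigr]
=\begin{bmatrix}
M_{a_0,B_0}^{(t_0)} & M_{a_1,B_1}^{(t_0)} & \cdots & M_{a_{z-1},B_{z-1}}^{(t_0)}\\
\bO & \widehat M_{a_1,B_1}^{(m-t_0)} & \cdots & \widehat M_{a_{z-1},B_{z-1}}^{(m-t_0)}
\end{bmatrix},
\]
where each $\widehat M_{a_i,B_i}^{(m-t_0)}$ is column-equivalent to $M_{a_i,B_i}^{(m-t_0)}$. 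This matrix is block lower triangular, so its determinant factors as
\[
\det\!\bigl(M_{a_0,B_0}^{(t_0)}\bigr)\cdot\det\!\bigl[\widehat M_{a_1,B_1}^{(m-t_0)}\,\big|\,\cdots\,\big|\,\widehat M_{a_{z-1},B_{z-1}}^{(m-t_0)}\bigr].
\]
The first factor is nonzero by the base case. Column-equivalence makes the second factor a nonzero scalar multiple of $\det[M_{a_1,B_1}^{(m-t_0)}\,|\,\cdots\,|\,M_{a_{z-1},B_{z-1}}^{(m-t_0)}]$, which is nonzero by the induction hypothesis (applied with $z-1$ distinct group indices and total size $m-t_0\le r$). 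Invertibility of $V$ then transfers non-singularity back to the original matrix.

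The \textbf{main obstacle} I expect is the careful bookkeeping of the $\bt/\otimes$ operations: specifically, (a) confirming that the matrix $V$ produced by Lemma~\ref{lem:VMaB} is invertible so that left-multiplication preserves non-singularity, and (b) making sure the block structures align so that the promised zero block in the bottom-left actually appears and the pieces $\widehat M_{a_i,B_i}^{(m-t_0)}$ assemble correctly from the single-$d$ statement of part~(ii).
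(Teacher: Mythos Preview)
Your proposal is correct and follows essentially the same induction-on-$z$ strategy as the paper (which omits the details, citing \cite{li2023msr}): apply Lemma~\ref{lem:VMaB} to the first block column to obtain a block-triangular form, then invoke the base case on $\det(M_{a_0,B_0}^{(t_0)})$ and the induction hypothesis on the remaining $z-1$ blocks. Your base-case analysis ($K_{a_0,\{b_0\}}^{(1)}=V_{b_0}$ and $\det K_{a_0,[2]}^{(2)}=f(\lambda_{2sa_0+[2s]},\gamma)$) is correct, and the concerns you flag about invertibility of $V$ are exactly the right ones---$V$ arises from elementary row operations in the construction of \cite{li2023msr}, so it is invertible; one small slip: the displayed matrix is block \emph{upper} triangular, not lower.
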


Note that Lemmas~\ref{lem:VMaB}-\ref{lem:global} are almost the same as \cite[Lemma~3, Lemma~7]{li2023msr}.
Please refer to~\cite{li2023msr} for the omitted proof.
 Before giving the construction of our cooperative MSR code, we define an
intermediate $(n,k,\tilde \ell)$ array code
\begin{equation}\label{eq:inter_code}
    \widetilde{\cC}= \{(\widetilde C_0, \dots, \widetilde{C}_{n-1}): \sum_{i\in [n]}\widetilde H_i\widetilde C_i = \bzero, \widetilde C_i\in \Fq^{\tilde \ell}\},
\end{equation}
where 
$\widetilde H_{2a+b} = M_{a,b}^{(r)}$ for $a\in [n/2], b\in [2]$.
Note that if we set $m = r$ in Lemma~\ref{lem:global}, then we obtain the MDS
property of the array code~\eqref{eq:inter_code}.
\begin{lemma}
    The code $\widetilde{\cC}$ in~\eqref{eq:inter_code} is an $(n,k,\tilde{\ell}=s^{n/2})$ MDS array code.
\end{lemma}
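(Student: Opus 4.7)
The plan is to reduce the MDS property to a direct application of Lemma~\ref{lem:global} with the choice $m=r$. By the definition in~\eqref{eq:inter_code}, showing that $\widetilde{\cC}$ is an MDS array code is equivalent to showing that for every subset $\cF\subset [n]$ of size $r$, the square $r\tilde\ell\times r\tilde\ell$ matrix $[\widetilde H_i : i\in\cF]$ is invertible.

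First, I would parameterize the failed set $\cF$ through the group structure. Each index $i\in\cF$ has the form $i=2a+b$ with $a\in[n/2]$ and $b\in[2]$, so let $a_0,\dots,a_{z-1}$ be the distinct groups appearing among the indices of $\cF$, and for each such $a_j$ let $B_j\subseteq [2]$ collect the values of $b$ for which $2a_j+b\in\cF$. By construction the $a_j$'s are pairwise distinct, each $B_j$ is nonempty, and $\sum_{j=0}^{z-1}|B_j|=|\cF|=r$.

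Next, after a suitable column permutation (which does not affect invertibility), the matrix $[\widetilde H_i:i\in\cF]$ is exactly
\[
\begin{bmatrix} M_{a_0,B_0}^{(r)} & M_{a_1,B_1}^{(r)} & \cdots & M_{a_{z-1},B_{z-1}}^{(r)}\end{bmatrix},
\]
because $\widetilde H_{2a+b}=M_{a,b}^{(r)}$ and $M_{a,B}^{(r)}$ is the horizontal concatenation of the $M_{a,b}^{(r)}$ for $b\in B$. Now I would apply Lemma~\ref{lem:global} with $m=r$ and the data $(a_j,B_j)_{j\in[z]}$: the hypotheses $|B_0|+\cdots+|B_{z-1}|=m\le r$ and the distinctness of the $a_j$'s are exactly what we verified above, so the determinant is nonzero. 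This gives the invertibility of $[\widetilde H_i:i\in\cF]$, and hence the MDS property.

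There is essentially no obstacle beyond the bookkeeping, since the heavy lifting is packaged inside Lemma~\ref{lem:global}. The only thing to be careful about is that the column permutation is benign and that the matching between $\cF$ and the pairs $(a_j,B_j)$ is a bijection respecting sizes, so the concatenated matrix really is square of size $r\tilde\ell\times r\tilde\ell$. Once that is noted, the proof is a one-line invocation of Lemma~\ref{lem:global}.
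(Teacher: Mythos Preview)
Your proposal is correct and is exactly the approach taken in the paper: the paper states right before this lemma that setting $m=r$ in Lemma~\ref{lem:global} yields the MDS property of~\eqref{eq:inter_code}, which is precisely the reduction you carry out (grouping the $r$ failed indices by their group index $a$ and invoking Lemma~\ref{lem:global}). There is nothing to add.
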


\begin{remark}\label{remark: MSR}
    The $(n,k,\tilde{\ell})$ MDS array code $\widetilde{\cC}$ in~\eqref{eq:inter_code} is in fact an MSR code with repair degree $d=s+k-1$.
    This can be proved similarly by the method of~\cite{li2023msr}.
\end{remark}

Finally, we give the construction of our cooperative MSR code as
\begin{equation}
    \label{eq:coMSR}
    \cC = \{(C_0, \dots, C_{n-1}): \sum_{i\in [n]} H_iC_i =\bzero, C_i\in \Fq^{\ell} \}
\end{equation}
where $H_i = \bI_{s+h-1}\otimes \tilde H_i$ for $i\in [n]$.
In other words, we replicate the $(1,d)$-MSR code $\widetilde{\cC}$ $s+h-1$ times, obtaining an $(h,d)$-MSR code.

\begin{lemma}
    The code $\cC$ in~\eqref{eq:coMSR} is an $(n,k,\ell)$ MDS array code.
\end{lemma}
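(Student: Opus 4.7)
The plan is to reduce the MDS property of $\cC$ to that of the intermediate code $\widetilde{\cC}$, which has already been established via Lemma~\ref{lem:global} (taking $m=r$). Fix any set of $r$ indices $\cF = \{i_1, \ldots, i_r\}\subseteq [n]$. To prove the MDS property of $\cC$, it suffices to show that the $r\ell \times r\ell$ matrix $N := [H_{i_1}\ H_{i_2}\ \cdots\ H_{i_r}]$ is invertible, where $H_i = \bI_{s+h-1}\otimes \widetilde H_i$.

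The first key step is to observe that $N$ is column-equivalent to a Kronecker product. Specifically, there exists a permutation matrix $P$ of appropriate size such that
\begin{equation*}
    [\bI_{s+h-1}\otimes \widetilde H_{i_1}\ \ \bI_{s+h-1}\otimes \widetilde H_{i_2}\ \ \cdots\ \ \bI_{s+h-1}\otimes \widetilde H_{i_r}]\,P
    = \bI_{s+h-1}\otimes [\widetilde H_{i_1}\ \widetilde H_{i_2}\ \cdots\ \widetilde H_{i_r}].
\end{equation*}
This is a standard rearrangement: the left-hand side groups columns first by the index $j\in\cF$ and then by the $(s+h-1)$ block indices, while the right-hand side groups them first by block index and then by $j$. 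A single permutation $P$ swaps these two orderings, so it does not change invertibility.

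The second step applies the multiplicativity of determinants under Kronecker products. Setting $\widetilde N := [\widetilde H_{i_1}\ \cdots\ \widetilde H_{i_r}]$, we have $\det(\bI_{s+h-1}\otimes \widetilde N) = (\det \widetilde N)^{s+h-1}$, so $N$ is invertible if and only if $\widetilde N$ is. But the invertibility of $\widetilde N$ is exactly the MDS property of the intermediate code $\widetilde{\cC}$, which is the content of Lemma~\ref{lem:global} with $m=r$ (choose the $z$ distinct groups and subsets $B_j\subseteq [2]$ that correspond to the indices in $\cF$).

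There is no real obstacle here; the argument is essentially bookkeeping once the Kronecker structure $H_i = \bI_{s+h-1}\otimes \widetilde H_i$ is exploited. The only point to handle cleanly is specifying the permutation $P$ and verifying the column reordering identity, but this is a routine property of $\bI_m\otimes A$.
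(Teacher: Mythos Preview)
Your proposal is correct and follows exactly the paper's approach: the paper's proof is a single sentence stating that the result ``follows directly from the fact that $\widetilde{\cC}$ is an MDS array code and $H_i = \bI_{s+h-1}\otimes \tilde H_i$,'' and your argument simply spells out the Kronecker-product bookkeeping behind that sentence.
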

\begin{proof}
    This follows directly from the fact that $\widetilde{\cC}$ is an MDS array code and     
    $H_i = \bI_{s+h-1}\otimes \tilde H_i$ for $i\in [n]$.
\end{proof}
{  In the following, we give a small example of our cooperative MSR code.
    \begin{example}\label{Exam: 6324}
        Let $n=6,k=3,h=2,$ and $d=4$. Then $s=d-k+1=2$, $\tilde \ell=2^{3}=8$, and $\ell=3\times 2^{3}=24$.
        Let $q=16$ and $\omega$ be a primitive element of $\mathbb{F}_{16}$.
        We set $\lambda_i=\omega^i$ for $0\leq i\leq 11$ and $\gamma=\frac{1}{1+\omega}$.
        As similar as the proof of Lemma~\ref{lem:linearfield}, we can check that $g(\gamma)=\gamma^2(\gamma-1)^2\neq 0$,
        and
        \begin{align*}
            f(\lambda_{[4]}, \gamma)
            =\omega\cdot {\det}
            \begin{bmatrix}
                1+\omega   & 1+\omega \\
                1+\omega^3 & 1+\omega
            \end{bmatrix}
            =\omega^2(\omega+1)^3
            \neq 0
        \end{align*}
        as the minimum polynomial of $\omega$ over $\mathbb{F}_2$ is of degree $4$.
        Hence all those values satisfy the condition~\eqref{eq:flr}.
        Now we have $$
            \begin{array}{lcll}
                U_0= & \begin{bmatrix}
                           1 & \\&1
                       \end{bmatrix},    & U_1= & \frac{1}{(\gamma+1)^2}\begin{bmatrix}
                                                                            \gamma & 1      \\
                                                                            1      & \gamma
                                                                        \end{bmatrix}, \\\\
                V_0= & \begin{bmatrix}
                           \gamma & 1      \\
                           1      & \gamma
                       \end{bmatrix}, & V_1= & \begin{bmatrix}
                                                   1 & \\&1
                                               \end{bmatrix},
            \end{array}
        $$
        and the parity-check sub-matrices
        
            \begin{align*}
                \hspace*{-.2in}  \widetilde H_0  & =
                \left[\begin{array}{@{\hspace*{0pt}}c@{\hspace*{0pt}}c@{\hspace*{0pt}}|@{\hspace*{0pt}}c@{\hspace*{0pt}}c@{\hspace*{0pt}}|@{\hspace*{0pt}}c@{\hspace*{0pt}}c@{\hspace*{0pt}}|@{\hspace*{0pt}}c@{\hspace*{0pt}}c@{\hspace*{0pt}}}
                              {\rL[\gamma]{0}} & {\rL{1}}         &                &                &                &                &                &                \\
                              {\rL{0}}         & {\rL[\gamma]{1}} &                &                &                &                &                &                \\
                              \hline
                                               &                  & \rL[\gamma]{0} & \rL{1}         &                &                &                &                \\
                                               &                  & \rL{0}         & \rL[\gamma]{1} &                &                &                &                \\
                              \hline
                                               &                  &                &                & \rL[\gamma]{0} & \rL{1}         &                &                \\
                                               &                  &                &                & \rL{0}         & \rL[\gamma]{1} &                &                \\
                              \hline
                                               &                  &                &                &                &                & \rL[\gamma]{0} & \rL{1}         \\
                                               &                  &                &                &                &                & \rL{0}         & \rL[\gamma]{1}
                          \end{array}\right],
                                                 & \widetilde H_1 & =
                \left[\begin{array}{@{\hspace*{0pt}}c@{\hspace*{0pt}}c@{\hspace*{0pt}}|@{\hspace*{0pt}}c@{\hspace*{0pt}}c@{\hspace*{0pt}}|@{\hspace*{0pt}}c@{\hspace*{0pt}}c@{\hspace*{0pt}}|@{\hspace*{0pt}}c@{\hspace*{0pt}}c@{\hspace*{0pt}}}
                              {\rL{2}} &          &        &        &        &        &        &        \\
                                       & {\rL{3}} &        &        &        &        &        &        \\
                              \hline
                                       &          & \rL{2} &        &        &        &        &        \\
                                       &          &        & \rL{3} &        &        &        &        \\
                              \hline
                                       &          &        &        & \rL{2} &        &        &        \\
                                       &          &        &        &        & \rL{3} &        &        \\
                              \hline
                                       &          &        &        &        &        & \rL{2} &        \\
                                       &          &        &        &        &        &        & \rL{3}
                          \end{array}\right], \\
                \hspace*{-.2in}  \widetilde  H_2 & =
                \left[\begin{array}{@{\hspace*{0pt}}c@{\hspace*{0pt}}c@{\hspace*{0pt}}|@{\hspace*{0pt}}c@{\hspace*{0pt}}c@{\hspace*{0pt}}|@{\hspace*{0pt}}c@{\hspace*{0pt}}c@{\hspace*{0pt}}|@{\hspace*{0pt}}c@{\hspace*{0pt}}c@{\hspace*{0pt}}}
                              {\rL[\gamma]{4}} &                & {\rL{5}}         &                &                &                &                &                \\
                                               & \rL[\gamma]{4} &                  & \rL{5}         &                &                &                &                \\\hline
                              {\rL{4}}         &                & {\rL[\gamma]{5}} &                &                &                &                &                \\
                                               & \rL{4}         &                  & \rL[\gamma]{5} &                &                &                &                \\
                              \hline
                                               &                &                  &                & \rL[\gamma]{4} &                & \rL{5}         &                \\
                                               &                &                  &                &                & \rL[\gamma]{4} &                & \rL{5}         \\
                              \hline
                                               &                &                  &                & \rL{4}         &                & \rL[\gamma]{5} &                \\
                                               &                &                  &                &                & \rL{4}         &                & \rL[\gamma]{5}
                          \end{array}\right],
                                                 & \widetilde H_3 & =
                \left[\begin{array}{@{\hspace*{0pt}}c@{\hspace*{0pt}}c@{\hspace*{0pt}}|@{\hspace*{0pt}}c@{\hspace*{0pt}}c@{\hspace*{0pt}}|@{\hspace*{0pt}}c@{\hspace*{0pt}}c@{\hspace*{0pt}}|@{\hspace*{0pt}}c@{\hspace*{0pt}}c@{\hspace*{0pt}}}
                              {\rL{6}} &        &          &        &        &        &        &        \\
                                       & \rL{6} &          &        &        &        &        &        \\
                              \hline
                                       &        & {\rL{7}} &        &        &        &        &        \\
                                       &        &          & \rL{7} &        &        &        &        \\
                              \hline
                                       &        &          &        & \rL{6} &        &        &        \\
                                       &        &          &        &        & \rL{6} &        &        \\
                              \hline
                                       &        &          &        &        &        & \rL{7} &        \\
                                       &        &          &        &        &        &        & \rL{7}
                          \end{array}\right], \\
                \hspace*{-.2in}  \widetilde H_4  & =
                \left[\begin{array}{@{\hspace*{0pt}}c@{\hspace*{0pt}}c@{\hspace*{0pt}}|@{\hspace*{0pt}}c@{\hspace*{0pt}}c@{\hspace*{0pt}}|@{\hspace*{0pt}}c@{\hspace*{0pt}}c@{\hspace*{0pt}}|@{\hspace*{0pt}}c@{\hspace*{0pt}}c@{\hspace*{0pt}}}
                              {\rL[\gamma]{8}} &                &                &                & {\rL{9}}         &                &                &                \\
                                               & \rL[\gamma]{8} &                &                &                  & \rL{9}         &                &                \\
                              \hline
                                               &                & \rL[\gamma]{8} &                &                  &                & \rL{9}         &                \\
                                               &                &                & \rL[\gamma]{8} &                  &                &                & \rL{9}         \\
                              \hline
                              {\rL{8}}         &                &                &                & {\rL[\gamma]{9}} &                &                &                \\
                                               & \rL{8}         &                &                &                  & \rL[\gamma]{9} &                &                \\
                              \hline
                                               &                & \rL{8}         &                &                  &                & \rL[\gamma]{9} &                \\
                                               &                &                & \rL{8}         &                  &                &                & \rL[\gamma]{9}
                          \end{array}\right],
                                                 & \widetilde H_5 & =
                \left[\begin{array}{@{\hspace*{0pt}}c@{\hspace*{0pt}}c@{\hspace*{0pt}}|@{\hspace*{0pt}}c@{\hspace*{0pt}}c@{\hspace*{0pt}}|@{\hspace*{0pt}}c@{\hspace*{0pt}}c@{\hspace*{0pt}}|@{\hspace*{0pt}}c@{\hspace*{0pt}}c@{\hspace*{0pt}}}
                              {\rL{10}} &         &         &         &           &         &         &         \\
                                        & \rL{10} &         &         &           &         &         &         \\
                              \hline
                                        &         & \rL{10} &         &           &         &         &         \\
                                        &         &         & \rL{10} &           &         &         &         \\
                              \hline
                                        &         &         &         & {\rL{11}} &         &         &         \\
                                        &         &         &         &           & \rL{11} &         &         \\
                              \hline
                                        &         &         &         &           &         & \rL{11} &         \\
                                        &         &         &         &           &         &         & \rL{11}
                          \end{array}\right]. 
            \end{align*}
        The intermediate code $\widetilde \cC$ is defined as
        \begin{equation*}
            \widetilde{\cC}= \{(\widetilde C_0, \dots, \widetilde{C}_{5}): \sum_{i=0}^5\widetilde H_i\widetilde C_i = \mathbf{0}, \widetilde C_i\in \Fq^{8}\}.
        \end{equation*}
        The cooperative MSR code $\cC$ is defined as
        \begin{equation*}
            \cC = \{(C_0, \dots, C_{5}): \sum_{i=0}^5 H_iC_i =\mathbf{0}, C_i\in \Fq^{24} \}
        \end{equation*}
        where $H_i = \bI_{3}\otimes \widetilde H_i$ for $i\in \{0,\cdots,5\}$.
    \end{example}
}

\section{Repair scheme for any \texorpdfstring{$h$}{} failed nodes}\label{sec:Rep-h}
In this section, we describe the cooperative repair scheme of $\cC$ defined in~\eqref{eq:coMSR}.
Let $\cF=\{i_0, i_1, \cdots, i_{h-1}\}\subset[n]$ be the indices of any $h$ failed nodes, where $i_0 < i_1 < \cdots < i_{h-1}$.
This naturally induces a bijective map $\cI_{\cF}:\cF\to [h]$ which maps ${i_z}$ to $z$ for $z\in[h]$.
For simplicity, we write $\hat{i}=\Mp{i}$ for $i\in\cF$, i.e., $\hat{i}$ is the index of $i$ in $\cF$.
Let $\cH\subset [n]\backslash \cF$ be the collection of the indices of any $d$ helper nodes.

For $a\in[n/2], g\in[s]$, we first introduce the following
$\tilde{\ell}/s\times \tilde{\ell}$ row-selection matrix
$$R_{a,g}=\bI_{s^{n/2-a-1}}\otimes \bm e_g\otimes \bI_{s^a}$$ where $\bm e_g$
is the $g$-th row of $\bI_s$. Multiplying an $\tilde\ell\times \tilde\ell$
matrix $M$ from the left by $R_{a,g}$ is equivalent to selecting those rows in
$M$ whose indices $i$ satisfy that $i_a = g$.
We can verify that
\begin{align}\label{eq:R}
    \sum_{g\in[s]}R_{a,g}^{T}R_{a,g}=\bI_{\tilde{l}}.
\end{align}

Then, for $a\in [n/2]$, $g\in[s]$ and $z\in[h]$, we define the following $s
    \times (s+h-1)$ block matrix
\begin{align}\label{def:S}
    S_{a,g,z}(i,j)=\begin{cases}
                       R_{a, g \oplus_s i } & \text{if~} j=i \text{~or~}j=z+s \\
                       \bO                  & \text{otherwise,}
                   \end{cases}
\end{align}
where $i\in[s]$, $j\in[s+h-1]$.
Note that for $z=h-1$, the case $j=z+s$ is impossible. Simply put, for $z\in[h-1]$,
\begin{align}\label{equ:S_agz}
    \scalebox{0.72}{$\begin{array}{cc}\setlength{\arraycolsep}{0.5pt}
                  & \hspace{1.8in}(z+s)\text{-th~block~column}                        \\
                  & \hspace{1.8in}\downarrow                                     \\
        S_{a,g,z} & ={\left[\begin{matrix}
                                    R_{a,g\oplus_s 0} &\cdots&\bO &\bO&\cdots&\bO &R_{a,g\oplus_s 0} &\bO&\cdots&\bO \\[.1em]
                                    \vdots&\ddots&\vdots &\vdots&\ddots&\vdots &\vdots &\vdots&\ddots&\vdots \\[.9em]
                                    \bO&\cdots &R_{a,g\oplus_s (s-1)} &\bO&\cdots&\bO &R_{a,g\oplus_s (s-1)} &\bO&\cdots&\bO\\
                                \end{matrix}\right]} \\
    \end{array}$}
\end{align}
and for $z=h-1$,
\begin{align}\label{equ:S_agh-1}
    \scalebox{0.9}{$S_{a,g,h-1}=\begin{bmatrix}
                    R_{a,g\oplus_s 0} & \cdots & \bO                   & \bO    & \cdots & \bO    \\
                    \vdots            & \ddots & \vdots                & \vdots & \ddots & \vdots \\
                    \bO               & \cdots & R_{a,g\oplus_s (s-1)} & \bO    & \cdots & \bO
                \end{bmatrix}$}.
\end{align}
{Given any matrix $M$ with $\ell$ rows, we regard $M=[M_0^T, M_1^T, \cdots, M_{s+h-2}^T]^T$ as an $(s+h-1)\times 1$ block matrix formed by vertically joining the $s+h-1$ matrices $M_i$, where each $M_i$ is a matrix with $\tilde{\ell}$ rows. Then multiplying $M$ from the left by $S_{a,g,h-1}$ is equivalent to selecting those rows in $M_i$ whose indices $j$ satisfy that $j_a = g\oplus_s i$ for $i\in[s]$, i.e.,
    \begin{align*}
        S_{a,g,h-1}M=
        \begin{bmatrix}
            R_{a,g\oplus_s 0}M_0 \\
            R_{a,g\oplus_s 1}M_1 \\
            \vdots               \\
            R_{a,g\oplus_s (s-1)}M_{s-1}
        \end{bmatrix}.
    \end{align*}
    Similarly, for $z\in[h-1]$ we have
    \begin{align*}
        S_{a,g,z}M=
        \begin{bmatrix}
            R_{a,g\oplus_s 0}(M_0+M_{z+s}) \\
            R_{a,g\oplus_s 1}(M_1+M_{z+s}) \\
            \vdots                         \\
            R_{a,g\oplus_s (s-1)}(M_{s-1}+M_{z+s})
        \end{bmatrix}.
    \end{align*}
}

For any failed node $i\in\cF$, we define the following $\tilde{\ell}\times
    (s+h-1)\tilde{\ell}$ \emph{repair matrix} $$\cR^{\cF}_{i}=S_{\lfloor \frac
    i2\rfloor,0,\hat{i}}(\bI_{s+h-1}\otimes\Tg{U_{i\bmod 2}}{\lfloor \frac
        i2\rfloor}).$$ Note that $\lfloor \frac i2\rfloor$ is the group's index of node
$i$, and $\hat{i}$ is the index of $i$ in $\cF$.
{To repair the failed
    nodes in $\cF=\{i_0, i_1, \cdots, i_{h-1}\}$, we will use the following $h$
    equations
    \begin{align*}
        (\cR_i^{\cF}\otimes \bI_r)\sum_{j\in[n]} H_j C_j=0,\ i\in\cF.
    \end{align*}
Now the cooperative pairing matrix comes into play, and we give a sketch of our proof below.
We fix some $i\in\cF$ and write $i=2a+b$, where $a=\lfloor \frac{i}{2} \rfloor$ and $b= i \bmod{2}.$
Using the fact that $U_bV_b=\R{F_b}$,
we can first compute that 
$$(\cR^{\cF}_{i}\otimes \bI_r) H_{i}=(S_{a,0,\hat{i}}\otimes\bI_r)(\bI_{s+h-1}\otimes\Tg{K}{a})$$
where $K=(\R{F_b}\otimes \mathbf{1}^{(r)})\odot \cK^{(r)}(\lambda_{si+[s]})$.
Recall that $\mathbf{rot}(F_b)$ is a circulate entry-wise nonzero matrix.
Hence by \eqref{eq:Kii} in Appendix~\ref{app:proof small-MDS} we can split the term $(\cR^{\cF}_{i}\otimes \bI_r) H_{i}C_i$ into $s$ terms, and each term carries $\frac{\ell}{s+h-1}$ ``symbols'' of $C_i$.
For the other node $j=2a+(b\oplus 1)$ in group $a$,
by $U_b V_{b\oplus 1}=\bI_s$ we compute that
$$(\cR^{\cF}_{i}\otimes \bI_r) H_{j}=(S_{a,0,\hat{i}}\otimes\bI_r)(\bI_{s+h-1}\otimes\Tg{K}{a})$$
where $K=(\bI_s\otimes \mathbf{1}^{(r)})\odot \ \cK^{(r)}(\lambda_{sj+[s]}).$
Then by \eqref{eq:Kij} in Appendix~\ref{app:proof small-MDS} the term $(\cR^{\cF}_{i}\otimes \bI_r) H_{j}C_j$ will be only transformed into
one term carrying $\frac{\ell}{s+h-1}$ ``symbols'' of $C_j$.
For the nodes $j$ does not lie in the group $a$, the term $(\cR^{\cF}_{i}\otimes \bI_r) H_{j}C_j$ will also be only transformed into
one term carrying $\frac{\ell}{s+h-1}$ ``symbols'' of $C_j$.
Collecting these $s+1+n-2$ terms together, we can define an $(n+s-1, d , \tilde\ell)$ MDS array code by Lemma~\ref{lem:small MDS},
and this enables us to recover $\frac{s\ell}{s+h-1}$ ``symbols'' of $C_i$ and $\frac{\ell}{s+h-1}$ ``symbols'' of each other node $C_j, j\in\cF$.
We repeat this repair process for the failed nodes in $\cF$, and combining these ``symbols" together will complete the repair.
The details of the repair scheme are given in Lemma~\ref{lem:invertible} and Algorithm~\ref{algo:repair}.    
}

We set the following notations for our formal statement.
\begin{itemize}
    \item [(1)] For $g\in [s]$, we define
          \begin{align}\label{def:H,C1}
              \begin{split}
                  H_{i,i}^{\langle g\rangle} & =(\cR^{\cF}_{i}\otimes \bI_r) H_{i} S^T_{\lfloor \frac  i2\rfloor,g,h-1},\\ D_{i,i}^{\langle g\rangle}&=S_{\lfloor \frac  i2\rfloor,g,\hat{i}},\\
                  C_{i,i}^{\langle g\rangle} & =D_{i,i}^{\langle g\rangle}C_i.
              \end{split}
          \end{align}
    \item [(2)] For $j\in [n]\setminus\{i\}$ with $\lfloor \frac  j2\rfloor=\lfloor \frac  i2\rfloor$,  we define
          \begin{align}\label{def:H,C2}
              \begin{split}
                  H_{i,j} & =(\cR^{\cF}_{i}\otimes \bI_r) H_{j} S^T_{\lfloor \frac  i2\rfloor,0,h-1}, \\
                  D_{i,j} & =S_{\lfloor \frac  i2\rfloor,0,\hat{i}},                                  \\
                  C_{i,j} & = D_{i,j}C_j.
              \end{split}
          \end{align}
    \item [(3)]  For $j\in [n]\setminus\{i\}$ with $\lfloor \frac  j2\rfloor \neq \lfloor \frac  i2\rfloor$, we define
          \begin{align}\label{def:H,C3}
              \begin{split}
                  H_{i,j} & =(S_{\lfloor \frac  i2\rfloor,0,\hat{i}}\otimes \bI_r)H_{j} S^T_{\lfloor \frac  i2\rfloor,0,h-1},\\ D_{i,j}&=\cR^{\cF}_{i},\\
                  C_{i,j} & = D_{i,j}C_j.
              \end{split}
          \end{align}
\end{itemize}

{At this time, $H_{i, i}^{\langle g\rangle}$ and $H_{i,j}$ are parity-check sub-matrices for the smaller code induced by the repair matrix $\cR^{\cF}_{i}$, sized $\tilde{\ell} \times \tilde{\ell}$. Meanwhile, $D_{i, i}^{\langle g\rangle}$ and $D_{i,j}$ are $\tilde{\ell} \times (s + h - 1)\tilde{\ell}$ matrices that define the codeword relation between the smaller code and $\cC$. }


The following Lemmas~\ref{lem:small MDS}-\ref{lem:invertible} will be used in
the repair scheme and their proofs can be found in the Appendices.
\begin{lemma}\label{lem:small MDS}
    For each $i\in\cF$, the following $n+s-1$ matrices
    $$H_{i,0},~ \cdots,~ H_{i,i-1},~ H^{\langle 0\rangle}_{i,i},~\cdots,~ H^{\langle s-1\rangle}_{i,i},~ H_{i,i+1},~ \cdots,~ H_{i,n-1}$$
    define an $(n+s-1, d , \tilde\ell)$ MDS array code.
    And for every codeword $(C_0, \dots, C_{n-1})\in\cC$ the corresponding vector
    $$(C_{i,0},\cdots,C_{i,i-1},C^{\langle 0\rangle}_{i,i},\cdots,C^{\langle s-1\rangle}_{i,i},C_{i,i+1},\cdots,C_{i,n-1})$$
    satisfies
    $$\sum_{g\in[s]}H_{i,i}^{\langle g\rangle}C_{i,i}^{\langle g\rangle}+\sum_{j\in[n]\backslash\{i\}}H_{i,j}C_{i,j}=\mathbf{0}.$$
\end{lemma}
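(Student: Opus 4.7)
The proof splits into two parts: (A) verifying that the listed vector $(C_{i,0}, \dots, C_{i,i-1}, C^{\langle 0\rangle}_{i,i}, \dots, C^{\langle s-1\rangle}_{i,i}, C_{i,i+1}, \dots, C_{i,n-1})$ satisfies the stated parity-check identity, and (B) proving the MDS array code property.

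For part (A), my plan is to start from the original codeword constraint $\sum_{j\in[n]}H_jC_j=\bzero$ of $\cC$ and multiply on the left by $\cR^{\cF}_i\otimes\bI_r$. I would then rewrite $(\cR^{\cF}_i\otimes\bI_r)H_iC_i$ as $\sum_{g\in[s]}H^{\langle g\rangle}_{i,i}C^{\langle g\rangle}_{i,i}$ plus a remainder, by inserting the ``resolution of identity'' coming from (\ref{eq:R}); using the block structure of $S_{a_i,g,z}$, the sum $\sum_{g\in[s]}S^T_{a_i,g,h-1}S_{a_i,g,\hat i}$ equals the block matrix that projects $C_i$ onto its first $s$ copies and adds the $(\hat i+s)$-th copy to each (when $\hat i<h-1$), while annihilating the remaining copies. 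For $j\neq i$ in the same group as $i$, an analogous insertion with $g=0$ produces $H_{i,j}C_{i,j}$ together with an ``auxiliary-copy'' contribution involving $\tilde H_j C^{(\hat i+s)}_j$. For $j$ outside the group of $i$, I invoke Lemma~\ref{lem:exchange}: the factor $\bI_{s+h-1}\otimes\Tg{U_{b_i}}{a_i}$ inside $\cR^{\cF}_i$ can be commuted past $H_j=\bI_{s+h-1}\otimes\tilde H_j$ (up to a compatible replacement) because $\tilde H_j$ is obtained from a blow-up at the coordinate $\lfloor j/2\rfloor\neq a_i$; this exchange is precisely why case~(3) absorbs $\cR^{\cF}_i$ into $D_{i,j}$ while leaving only $S_{a_i,0,\hat i}\otimes\bI_r$ on the left of $H_j$. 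Summing all the remainder ``auxiliary-copy'' contributions across the $j$'s gives $\sum_j\tilde H_jC^{(\hat i+s)}_j=\bzero$, since $\widetilde\cC$'s parity-check equation is satisfied in each of the $s+h-1$ independent copies. What remains is exactly the desired identity.

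For part (B), I would reduce to Lemma~\ref{lem:global}. After absorbing the invertible left factor $\bI_{s+h-1}\otimes\Tg{U_{b_i}}{a_i}$ (invertible since $U_0=\bI_s$ and $U_1=\R{F_1}$ are both invertible) and the right-hand selection $S^T_{a_i,g,h-1}$, any $r$-subset of the $n+s-1$ matrices becomes equivalent, up to invertible row/column operations, to a concatenation of matrices built from the parity matrices $\tilde H_0,\dots,\tilde H_{n-1}$ of $\widetilde\cC$. The $s$ matrices $H^{\langle 0\rangle}_{i,i},\dots,H^{\langle s-1\rangle}_{i,i}$ act like a ``splitting'' of $\tilde H_i$ into $s$ sub-matrices, each indexed by a row-selector $R_{a_i,g}$; the interaction of this row-selection with the blow-up $\Tg{K^{(r)}_{a_i,b_i}}{a_i}$ is such that including up to $s$ of these sub-matrices alongside $r-s$ other matrices preserves invertibility. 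The determinant of the concatenated $r\tilde\ell\times r\tilde\ell$ matrix then factors, after an appropriate row/column permutation, into a product of smaller determinants each of which is covered by Lemma~\ref{lem:global} and is therefore nonzero.

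The main obstacle is the careful bookkeeping of the $S_{a,g,z}$ operators and the asymmetric treatment of in-group versus out-of-group helpers in part (A), where the commutation hypothesis of Lemma~\ref{lem:exchange} must be verified for $\Tg{U_{b_i}}{a_i}$ against $\tilde H_j$ whenever the blow-up coordinates differ, and where the ``auxiliary-copy cancellation'' must be tracked precisely through the $(s+h-1)$-copy block structure. For part (B), the crux is to identify the row/column permutation that exposes a block-triangular structure reducing the $r$-subset determinant to an instance of Lemma~\ref{lem:global}; this requires threading together the blow-up formalism, the row-selectors $R_{a_i,g}$, and the invertibility of $U_{b_i}$.
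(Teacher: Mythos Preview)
Your outline for part~(A) --- left-multiply $\sum_j H_jC_j=\bzero$ by $\cR^{\cF}_i\otimes\bI_r$, then use Lemma~\ref{lem:exchange} to slide $\bI_{s+h-1}\otimes\Tg{U_{b}}{a}$ past $H_j$ when $\lfloor j/2\rfloor\neq a$ --- matches the paper. But the way you dispose of the ``remainders'' is wrong in two places.

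First, the auxiliary-copy cancellation you describe does not happen by summing over $j$ and invoking the parity check of $\widetilde\cC$. Writing $\sum_{g\in[s]}S^T_{a,g,h-1}S_{a,g,\hat i}=\bI_\ell-Q_{\hat i}$ (this is the paper's identity~\eqref{eq:I}), the leftover for every $j$ is $(S_{a,0,\hat i}\otimes\bI_r)(\bI_{s+h-1}\otimes M)Q_{\hat i}(\,\cdot\,)$ for the appropriate $M$, and this matrix product is \emph{identically} zero (the paper's~\eqref{eq:O}): the column of $S_{a,0,\hat i}$ indexed $\hat i+s$ duplicates the first $s$ columns, while $Q_{\hat i}$ has the corresponding blocks with opposite signs, so they cancel inside the product regardless of $C_j$. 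Your proposed mechanism, with remainders of the form $\tilde H_jC_j^{(\hat i+s)}$ summing to zero, does not match what actually happens; in particular the in-group and out-of-group remainders are not even of the same shape (one carries $E_{a,b}$ on the right, the other on the left), so a uniform $\sum_j\tilde H_jC_j^{(\hat i+s)}$ is not what you get.

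Second, for the node $j\neq i$ with $\lfloor j/2\rfloor=a$ you quietly keep only $g=0$ without explaining why the $g\neq 0$ summands vanish. This is exactly where the cooperative pairing property is used: one has $(U_b\otimes\bI_r)K^{(r)}_{a,b\oplus 1}=(U_bV_{b\oplus 1}\otimes\mathbf{1}^{(r)})\odot\cK^{(r)}(\ldots)=(\bI_s\otimes\mathbf{1}^{(r)})\odot\cK^{(r)}(\ldots)$, which is block-diagonal, and then Lemma~\ref{lem:SHS=?} forces the $g\neq 0$ pieces to zero. By contrast, for $j=i$ one gets $U_bV_b=\R{F_b}$, which is circulant with no zero entries, and that is precisely why all $s$ matrices $H_{i,i}^{\langle g\rangle}$ are nonzero. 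Your write-up does not distinguish these two behaviours.

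For part~(B), your plan to reduce to Lemma~\ref{lem:global} is correct in spirit, but the paper does not proceed via a block-triangular factorization. It first computes $H_{i,i}^{\langle g\rangle}$ and $H_{i,j}$ explicitly through Lemma~\ref{lem:SHS=?} (equations~\eqref{subrepair1}--\eqref{subrepair3}) and observes that each one is a blow-up $\Phi_{n/2,\tilde c}(\cdot)$ of a kernel of the same Vandermonde-type shape as the $K_{a,b}^{(r)}$'s (with the $H_{i,i}^{\langle g\rangle}$ contributing new block-diagonal kernels $c_{b,g}\,\blkdiag(L^{(r)}_{si+(g\oplus_s x)}:x\in[s])$ at coordinate $n/2-1$). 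Once the $n+s-1$ matrices are in this form, the argument of Lemma~\ref{lem:global} via Lemma~\ref{lem:VMaB} applies verbatim. The ``absorb the invertible left factor, then factor into smaller determinants'' description you give does not by itself produce this structure; you need the explicit evaluation of Lemma~\ref{lem:SHS=?} to see it.
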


\begin{lemma}\label{lem:invertible}
    The $\ell\times \ell$ matrix formed by vertically joining the $s+h-1$ matrices $D_{i,i}^{\langle g\rangle},g\in[s],D_{j,i}, j\in\cF\backslash\{i\}$, is invertible.
\end{lemma}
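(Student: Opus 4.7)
I will prove invertibility by showing the kernel of the stacked $\ell\times\ell$ matrix $M$ is trivial. Since $\ell=(s+h-1)\tilde\ell$, decompose an arbitrary $\bm x\in\Fq^{\ell}$ as $\bm x=(x^0,\dots,x^{s-1},y^0,\dots,y^{h-2})$ with each $x^p,y^z\in\Fq^{\tilde\ell}$, matching the $s+h-1$ block columns of every $S$-matrix. Set $a=\lfloor i/2\rfloor$. The only combinatorial fact I need is that $\cI_{\cF}\colon\cF\to[h]$ is a bijection, so whenever $\hat i<h-1$ some (necessarily unique) $j_\ast\in\cF\setminus\{i\}$ satisfies $\hat{j_\ast}=h-1$.

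\textbf{Step 1 (first $s$ block rows).} Reading off $S_{a,g,\hat i}$ from \eqref{equ:S_agz}--\eqref{equ:S_agh-1}, the $p$-th $(\tilde\ell/s)$-subvector of $D_{i,i}^{\langle g\rangle}\bm x$ equals $R_{a,g\oplus_s p}\bigl(x^p+[\hat i<h-1]\,y^{\hat i}\bigr)$, where $[\,\cdot\,]$ is the Iverson bracket. Setting this to $\bzero$ for every $g\in[s]$ at fixed $p$, and invoking the row-orthogonality identity \eqref{eq:R}, yields $x^p=-[\hat i<h-1]\,y^{\hat i}$ for every $p\in[s]$.

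\textbf{Step 2 (remaining $h-1$ block rows).} For each $j\in\cF\setminus\{i\}$, put $a_j=\lfloor j/2\rfloor$ and $b_j=j\bmod 2$. By \eqref{def:H,C2}--\eqref{def:H,C3}, the $p$-th subvector of $D_{j,i}\bm x$ is $R_{a_j,p}(x^p+[\hat j<h-1]\,y^{\hat j})$ when $a_j=a$, and $R_{a_j,p}\,\Tg{U_{b_j}}{a_j}(x^p+[\hat j<h-1]\,y^{\hat j})$ otherwise. Substituting the expression for $x^p$ from Step 1, stacking over $p\in[s]$ via \eqref{eq:R}, and cancelling the invertible factor $\Tg{U_{b_j}}{a_j}$ in the different-group case, I arrive at the single vector equation
$$-[\hat i<h-1]\,y^{\hat i}+[\hat j<h-1]\,y^{\hat j}=\bzero$$
for every $j\in\cF\setminus\{i\}$.

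\textbf{Step 3 (conclude).} If $\hat i<h-1$, apply Step 2 to $j_\ast$ to force $y^{\hat i}=\bzero$; Step 1 then gives $x^p=\bzero$ for every $p$, and Step 2 applied to each remaining $j$ gives $y^{\hat j}=y^{\hat i}=\bzero$. If instead $\hat i=h-1$, Step 1 already delivers $x^p=\bzero$ directly, and the reduced form of Step 2 (with the $y^{\hat i}$ term absent) forces $y^{\hat j}=\bzero$ for every $j\in\cF\setminus\{i\}$, since all such $\hat j$ lie in $[h-1]$. Either way $\bm x=\bzero$, so $M$ is invertible. The main obstacle is purely bookkeeping: correctly aligning the piecewise definitions of $S_{a',g',z'}$ appearing inside $D_{i,i}^{\langle g\rangle}$ and $D_{j,i}$ (same-group versus different-group, and $\hat j=h-1$ versus $\hat j<h-1$) against the block decomposition of $\bm x$; once this is done, the algebra collapses to just \eqref{eq:R} and invertibility of $\Tg{U_{b_j}}{a_j}$.
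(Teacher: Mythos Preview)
Your proof is correct. The paper's own argument is the ``row-space dual'' of yours: instead of showing the kernel is trivial, it explicitly constructs each elementary block-projector $E_z=\epsilon_z\otimes\bI_{\tilde\ell}$ (with $\epsilon_z$ the $z$-th row of $\bI_{s+h-1}$) as a row-linear combination of the matrices $D_{i,i}^{\langle g\rangle}$ and $D_{j,i}$, splitting into the same two cases $\hat i<h-1$ and $\hat i=h-1$ and singling out the same special index $j_\ast$ with $\hat{j_\ast}=h-1$. The combinatorial content is identical---both arguments rest on the observation that the $D_{i,i}^{\langle g\rangle}$'s together capture $x^p+[\hat i<h-1]y^{\hat i}$ and each $D_{j,i}$ captures (up to the invertible factor $\Tg{U_{b_j}}{a_j}$) the analogous sum with $\hat j$ in place of $\hat i$---so the difference is purely stylistic. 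Your kernel argument is arguably cleaner: it avoids the auxiliary matrices $P_{j,i}$ and $W_{x,y}$ the paper introduces, and the two-case split collapses into a single Iverson-bracket bookkeeping, at the cost of being slightly less constructive (the paper's version tells you exactly how to invert, not just that an inverse exists).
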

\begin{figure}[t]
    \centering
    \includegraphics[scale=0.80]{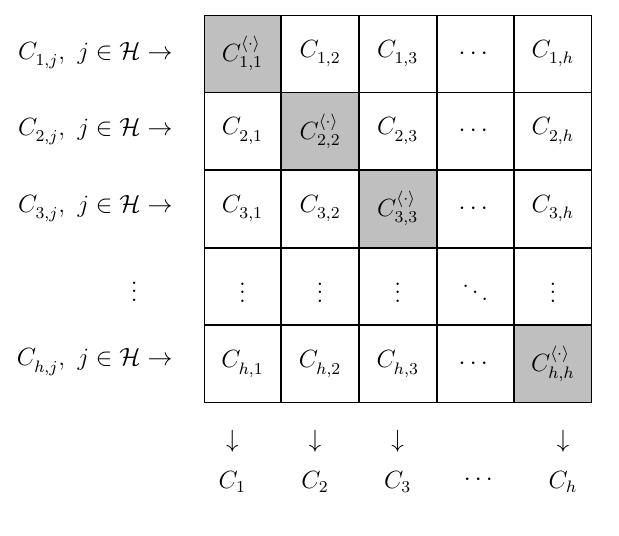}
    \caption{The repair scheme of our cooperative MSR codes.
    Without loosing of generality, we assume that $\cF = \{1,2,\dots, h\}$, and $\cH\subseteq[n]\setminus\cF$.
    For each $i\in \cF$, we have
    $C_{i,i}^{\langle g \rangle} = D_{i,i}^{\langle g \rangle}C_i, g\in [s]$, and $C_{i,j} = D_{i,j}C_j, j\in [n]\setminus\{i\}$.
    Here, for each $i\in \cF$, we use $C_{i,i}^{\langle\cdot\rangle}$ to denote the $s$ nodes $C_{i,i}^{\langle 0\rangle}, \cdots,C_{i,i}^{\langle s-1\rangle}$.
    All the off-diagonal nodes at the $i$th column will be transmitted to the node
    $C_i$. } \label{fig:enter-label}
\end{figure}

\begin{algorithm}[t]
    \DontPrintSemicolon
    \caption{\texttt{repair$(\cF, \cH)$}}
    \label{algo:repair}
    \KwIn{Two subsets $\cF, \cH\subseteq [n]$ of size $|\cF| = h, |\cH| = d$ and $\cF\cap \cH=\emptyset$, which collect the indices of failed nodes and the indices of helper nodes respectively.}
    \KwOut{The repaired nodes $\{C_i, i\in \cF\}$}
    \For{$i\in \cF$}{\label{line:1s}
    \For{$j\in \cH$}{
    Node $j$ computes $C_{i,j}=D_{i,j}C_j$

    Node $j$ \textbf{transmits} $C_{i,j}$ to node $i$\label{line:trans1} } Node $i$
    computes $$\{C_{i,i}^{\langle g \rangle}, g\in [s], C_{i,j},j\in
        \cF\setminus\{i\}\}$$ from the received data $\{C_{i,j}, j\in \cH\}$
    \Comment{Lemma~\ref{lem:small MDS}}

    \label{line:1e}
    }

    \For{$i\in \cF$}{\label{line:2s}
        \For{$j\in\cF\setminus\{i\}$}{
            Node $j$ \textbf{transmits} $C_{j,i}$ to node $i$\label{line:trans2}
        }

        Node $i$ repairs $C_i$ from $$\{C_{i,i}^{\langle g \rangle}, g\in [s], C_{j,i},
            j\in \cF\setminus\{i\}\}$$ \Comment{Lemma~\ref{lem:invertible}} \label{line:2e}
    }

    \Return $\{C_i, i\in \cF\}$
\end{algorithm}

\noindent{\textbf{Repair scheme.}} 
We illustrate the repair scheme in Fig.~\ref{fig:enter-label} and provide the complete steps in Algorithm~\ref{algo:repair}.
The repair process is divided into the following two steps.

\textbf{Step 1.}  (Row perspective of Fig.~\ref{fig:enter-label}) For each $i\in \cF$, the following steps are executed: Firstly, each helper node $j\in \cH$ calculates a vector $C_{i,j} = D_{i,j}C_j$ of length $\tilde \ell$ and sends it to node $i$. Then, by Lemma~\ref{lem:small MDS}, node $i$ can use the received data $\{C_{i,j}, j\in \cH\}$ to compute the $s+h-1$ vectors of length $\tilde\ell$, $\{C_{i,i}^{\langle g\rangle}, g\in [s], C_{i,j}, j\in \cF\setminus\{i\}\}$. These operations correspond to Lines~\ref{line:1s}-\ref{line:1e} in Algorithm~\ref{algo:repair}.

\textbf{Step 2.} (Column perspective of Fig.~\ref{fig:enter-label}) For each $i\in\cF$, node $i$ can be repaired by the following steps: First,  each node $j\in \cF\setminus\{i\}$ transmits the length-$\tilde\ell$ column vector $C_{j, i}$ computed in Step 1 to node $i$.
Recall that $$C_{i,i}^{\langle g\rangle} = D_{i,i}^{\langle g\rangle}C_i, g\in [s], C_{j,i} = D_{j,i}C_i, j\in \cF\setminus\{i\}.$$
By Lemma~\ref{lem:invertible}, $C_i$ can be recovered from $C_{i,i}^{\langle g\rangle}$, $g\in[s]$, and the received data $\{C_{j,i}, j\in \cF\setminus\{i\}\}$ from other failed nodes.
These operations correspond to Lines~\ref{line:2s}-\ref{line:2e} in Algorithm~\ref{algo:repair}.

It is easy to check that the repair scheme achieves the lower bound of repair
bandwidth in Theorem~\ref{thm:cut-set bound}. Specifically, the length of each
intermediate vector computed during the repair process is $\tilde\ell =
    \ell/(d-k+h)$, and the steps that occupy bandwidth only occur in
Line~\ref{line:trans1} and Line~\ref{line:trans2} of Algorithm~\ref{algo:repair}. It can be
easily calculated that the bandwidth consumed during the repair process is
$$\cfrac{hd\ell}{d-k+h} + \cfrac{h(h-1)\ell}{d-k+h}.$$ Here, the left side
represents the bandwidth between failed and survival nodes, while the right
side represents the bandwidth within the $h$ failed nodes.
\begin{theorem}
    The code $\cC$ given in \eqref{eq:coMSR} is an $(h,d)$ cooperative MSR code with sub-packetization $\ell=(d-k+h)s^{\lceil n/2 \rceil}$.
\end{theorem}
\section{Conclusion}
In this paper, we construct new cooperative MSR codes for any $h$ failed nodes
and $d$ helper nodes. The sub-packetization level of our new codes is
$(d-k+h)(d-k+1)^{\lceil n/2 \rceil}$. We first construct the
$(n,k,\tilde{\ell})$ MDS array code $\widetilde{\cC}$ in~\eqref{eq:inter_code}
and then replicate $\widetilde{\cC}$ $(d-k+h)$ times, obtaining an $(h,d)$
cooperative MSR code. In general, for any collection of the number of failed
nodes $\{h_1,\cdots,h_t\}$, we can replicate $\widetilde{\cC}$
lcm$(d-k+h_1,d-k+h_2,\cdots,d-k+h_t)$ times, obtaining a new cooperative MSR
code which can repair any $h\in\{1,h_1,\cdots,h_t\}$ failed nodes with any $d$
helper nodes and the least possible bandwidth. Furthermore, the
sub-packetization of this new code is
lcm$(d-k+h_1,d-k+h_2,\cdots,d-k+h_t)(d-k+1)^{\lceil n/2 \rceil}$.

\ifAppendix
    \appendices
\fi

\section{proof of Lemma~\ref{lem:small MDS} }
\label{app:proof small-MDS}
The results of Lemma~\ref{lem:small MDS} can be divided into the following two
lemmas.

\begin{lemma}\label{lemma:repair_MDS_mat}
    For each $i\in \cF$, the $n+s-1$ matrices of size $r\tilde\ell \times \tilde\ell$,
    \begin{equation*}
        H_{i,0}, \dots, H_{i,i-1}, H_{i,i}^{\langle 0\rangle},\cdots, H_{i,i}^{\langle s-1\rangle}, H_{i,i+1}, \dots, H_{i,n-1}
    \end{equation*}
    defines an $(n+s-1, d, \tilde\ell)$ MDS array code.
\end{lemma}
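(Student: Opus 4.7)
The plan is to establish the MDS property by reducing the invertibility of any $r$-column concatenation to an instance of Lemma~\ref{lem:global}. This extends the proof technique for the single-node MSR repair of~\cite{li2023msr} to the cooperative setting.

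Fix any $r$-subset $T$ of the listed matrices and denote its horizontal concatenation by $[T]$, an $r\tilde\ell \times r\tilde\ell$ square matrix. Write $a_0 = \lfloor i/2 \rfloor$, $b_0 = i \bmod 2$, and $b_1 = 1 - b_0$. The plan begins by making each matrix in $T$ explicit via three algebraic facts: (i) for $j$ with $\lfloor j/2\rfloor \neq a_0$, the blown-up $\Tg{U_{b_0}}{a_0}$ commutes with $M_{\lfloor j/2\rfloor,\, j\bmod 2}^{(r)}$, as can be verified directly via Lemma~\ref{lem:repre} or, equivalently, through Lemma~\ref{lem:exchange}; (ii) the cooperative pairing identity $U_{b_0}V_{b_1} = \bI_s$ from Lemma~\ref{lem:cooppairmat} collapses $(\Tg{U_{b_0}}{a_0}\otimes\bI_r)\,M_{a_0, b_1}^{(r)}$ into a block-diagonal form whose diagonal blocks are Vandermonde columns of the shape $L^{(r)}(\lambda)$; (iii) the companion identity $U_{b_0}V_{b_0} = \R{F_{b_0}}$ controls $(\Tg{U_{b_0}}{a_0}\otimes\bI_r)\,M_{a_0, b_0}^{(r)}$ via the invertible entrywise-nonzero circulant $\R{F_{b_0}}$.

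With these simplifications, the next step is to apply an invertible left-multiplication to $[T]$ that disentangles the row-selection factors $S_{a_0, 0, \hat i}$ and $\bI_{s+h-1}\otimes\Tg{U_{b_0}}{a_0}$ appearing inside $\cR_i^{\cF}$. Combined with a column reordering, this should transform $[T]$ (up to an invertible change of basis) into a matrix of the form
\begin{align*}
\left[M_{a_{p_0}, B_{p_0}}^{(r)} \smat M_{a_{p_1}, B_{p_1}}^{(r)} \smat \cdots \smat M_{a_{p_{z-1}}, B_{p_{z-1}}}^{(r)}\right],
\end{align*}
with distinct $a_{p_l} \in [n/2]$ and nonempty subsets $B_{p_l} \subseteq [2]$ satisfying $\sum_l |B_{p_l}| = r$; invertibility then follows from Lemma~\ref{lem:global} applied at $m = r$.

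The principal difficulty will be handling the $s'$ matrices of type $H_{i,i}^{\langle g\rangle}$ (for $g$ ranging over some $G_0 \subseteq [s]$ of size $s'$) that appear in $T$, since they all share the kernel matrix $M_{a_0, b_0}^{(r)}$ and therefore cannot be split naively into distinct $(a, b)$-contributions to the Lemma~\ref{lem:global} sum (which caps $|B_{a_0}|$ at $2$). The resolution will exploit the rotating column extractions $R_{a_0, g \oplus_s \cdot}^T$ defining $S^T_{a_0, g, h-1}$: as $g$ varies over $G_0$, these, combined with the nonvanishing entries of the circulant $\R{F_{b_0}}$, yield $s'$ linearly independent column projections of $M_{a_0, b_0}^{(r)}$. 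After a further invertible transformation exploiting the circulant structure, these $s'$ matrices can then be repackaged (possibly together with the contribution from $H_{i,\, 2a_0 + b_1}$ when it lies in $T$) into a form fitting Lemma~\ref{lem:global} at a suitably adjusted parameter.
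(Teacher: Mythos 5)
Your proposal follows essentially the same strategy as the paper: rewrite each $H_{i,j}$ into an explicit blown-up kernel form using the identities $U_{b_0}V_{b_1}=\bI_s$ and $U_{b_0}V_{b_0}=\R{F_{b_0}}$, commute the blow-up of $U_{b_0}$ past the unaffected groups via Lemma~\ref{lem:exchange}, and then reduce to the inductive determinant argument of Lemma~\ref{lem:global}. You also correctly locate the one nontrivial obstacle, namely that all of $H^{\langle 0\rangle}_{i,i},\dots,H^{\langle s-1\rangle}_{i,i}$ (and possibly $H_{i,2a_0+b_1}$) live at the same group index $\widetilde{\lfloor i/2\rfloor}=n/2-1$, which cannot be fed to Lemma~\ref{lem:global} as stated since it caps $|B_{a}|$ at $2$.

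Where your proposal stays vague is exactly where the paper does the concrete work. Rather than an unspecified ``further invertible transformation exploiting the circulant structure,'' the paper pushes the product all the way through Lemma~\ref{lem:SHS=?} and obtains the closed form $H_{i,i}^{\langle g\rangle}=c_{b,g}\,\Tg{\blkdiag(L_{si+(g\oplus_s x)}:x\in[s])}{n/2-1}$ (with $c_{b,g}$ the coefficient of $x^g$ in $F_b$, nonzero because $\R{F_b}$ is entrywise nonzero), together with $H_{i,2a+(b\oplus1)}=\Tg{\blkdiag(L_{sj+x}:x\in[s])}{n/2-1}$ and $H_{i,j}=\Tg{K^{(r)}_{\lfloor j/2\rfloor,j\bmod 2}}{\widetilde{\lfloor j/2\rfloor}}$ for out-of-group $j$. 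These explicit cyclically-rotated block-diagonal Vandermonde blow-ups, all built from disjoint sets of the $\lambda$'s, are precisely what makes the Lemma~\ref{lem:VMaB}/\ref{lem:global}-style elimination go through. Note also that your description of ``an invertible left-multiplication that disentangles $S_{a_0,0,\hat i}$'' is imprecise: $S_{a_0,0,\hat i}$ is a rectangular row-selector and not invertible; the paper never peels it off, it composes it with the $H_j$'s via Lemma~\ref{lem:SHS=?}. So: same approach, correct diagnosis of the obstruction, but your resolution step should be replaced by the explicit computation of~\eqref{subrepair1}--\eqref{subrepair3} before the induction can be invoked.
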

\begin{lemma}\label{lemma:repair_equation}
    For $(C_0, \dots, C_{n-1})\in \cC$, we have
    \begin{align*}
          & (R_i^{\cF}\otimes I_r)(\sum_{j\in[n]}H_j C_j)                                                                            \\
        = & \sum_{g\in[s]}H_{i,i}^{\langle g\rangle}C_{i,i}^{\langle g\rangle}+\sum_{j\in[n]\backslash\{i\}}H_{i,j}C_{i,j} = \bzero.
    \end{align*}
\end{lemma}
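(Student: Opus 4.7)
The plan is to reduce the statement to a matrix identity. Once
\[
(\cR_i^{\cF} \otimes \bI_r)\,H_j \;=\; \begin{cases} \sum_{g \in [s]} H_{i,i}^{\langle g\rangle}\,D_{i,i}^{\langle g\rangle}, & j = i,\\ H_{i,j}\,D_{i,j}, & j \neq i \end{cases}
\]
is established for every $j \in [n]$, the first equality follows by linearity (recall $C_{i,j} = D_{i,j}C_j$ and $C_{i,i}^{\langle g\rangle} = D_{i,i}^{\langle g\rangle}C_i$), and the second equality ($=\bzero$) follows from the codeword relation $\sum_j H_j C_j = \bzero$. Hence the entire task reduces to verifying this matrix identity.

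Set $a = \lfloor i/2\rfloor$ and $b = i \bmod 2$. My approach is to unfold the right-hand side using the definitions in \eqref{def:H,C1}--\eqref{def:H,C3} together with the factorization $\cR_i^{\cF} = S_{a,0,\hat i}(\bI_{s+h-1} \otimes \Tg{U_b}{a})$ and the block-diagonal structure $H_j = \bI_{s+h-1} \otimes \tilde H_j$. Applying the mixed-product rule $(A \otimes B)(C \otimes D) = AC \otimes BD$ then pulls the $\bI_{s+h-1}$ factor outside and turns each identity into an assertion about $\tilde\ell$-dimensional slice operations. Three cases arise, depending on the form of $D_{i,j}$: (A) $j \neq i$ with $\lfloor j/2\rfloor \neq a$ (different group), where $D_{i,j} = \cR_i^{\cF}$; (B) $j = 2a + (1-b)$, the partner of $i$ in its group, where $D_{i,j} = S_{a,0,\hat i}$; (C) $j = i$, where a sum over $g \in [s]$ appears on the right.

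Case (A) is routine bookkeeping: a direct block-by-block computation shows that $S^T_{a,0,h-1} S_{a,0,\hat i}$ equals $\bI_{\tilde\ell}$ in each of the first $s$ diagonal block positions and in the coupling positions $(j_1, \hat i + s)$ for $j_1 \in [s]$, and $\bO$ elsewhere, which matches exactly the rows that $S_{a,0,\hat i}$ picks out on the left. Case (C) uses \eqref{eq:R}: summing the same computation over $g$ shows that $\sum_g S^T_{a,g,h-1} S_{a,g,\hat i}$ has the same support pattern, so the residual $\bI_\ell - \sum_g S^T_{a,g,h-1} S_{a,g,\hat i}$ is supported only on the cooperation slices $s,\dots,s+h-2$; that residual is killed by $(\cR_i^{\cF} \otimes \bI_r)H_i$ using the pairing identity $U_b V_b = \R{F_b}$ together with Lemma~\ref{lem:exchange}. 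Case (B) is the most delicate: the identity reduces to $(\cR_i^{\cF}\otimes \bI_r) H_j (\bI_\ell - S^T_{a,0,h-1} S_{a,0,\hat i}) = \bO$, and the cancellation on the cooperation slices uses the other pairing identity $U_b V_{b\oplus 1} = \bI_s$ from Lemma~\ref{lem:cooppairmat}, lifted through the blow-up by Lemmas~\ref{lem:exchange}--\ref{lem:combining}.

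The main obstacle lies in Cases (B) and (C): these are where the cooperative pairing matrices $U_b, V_b$ actually enter, and one must carefully track how $\Tg{U_b}{a}\tilde H_{2a + b'}$ for $b' \in \{0,1\}$ commutes past the row-selection matrices $R_{a,g}$ via the exchange law of Lemma~\ref{lem:exchange}. Case (A), by contrast, only requires pushing $S$-matrices through Kronecker products. Once the matrix identity is verified in all three cases, summing over $j$ and applying the result to $(C_0,\dots,C_{n-1})$ completes the proof of Lemma~\ref{lemma:repair_equation}.
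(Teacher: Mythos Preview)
Your high-level plan --- reduce to the matrix identity $(\cR_i^{\cF}\otimes\bI_r)H_j = H_{i,j}D_{i,j}$ (resp.\ $\sum_g H_{i,i}^{\langle g\rangle}D_{i,i}^{\langle g\rangle}$ for $j=i$) and treat three cases --- matches the paper (Lemma~\ref{lem:RHC=?}). But the lemma assignments are inverted in a way that would block execution. Write $E_{a,b}=\bI_{s+h-1}\otimes\Tg{U_b}{a}$, so $\cR_i^{\cF}=S_{a,0,\hat i}E_{a,b}$. In Case~(A) ($\lfloor j/2\rfloor\neq a$) the left side carries $(E_{a,b}\otimes\bI_r)H_j$ while the right side carries $H_j\,(\cdots)\,E_{a,b}$; to compare them you must first commute $(E_{a,b}\otimes\bI_r)H_j=H_jE_{a,b}$, and this is precisely Lemma~\ref{lem:exchange} at distinct positions $a\neq\lfloor j/2\rfloor$. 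So Case~(A) is not ``routine bookkeeping''; it is exactly the case that needs the exchange law. Conversely, in Cases~(B) and~(C) both $\Tg{U_b}{a}$ and $\tilde H_j=\Tg{K^{(r)}_{a,\cdot}}{a}$ act at the \emph{same} position $a$, so Lemma~\ref{lem:exchange} (which requires $a_0\neq a_1$) is inapplicable; what is needed there is Lemma~\ref{lem:combining}.

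There is also a concrete miscalculation: the block entries of $S^T_{a,0,h-1}S_{a,0,\hat i}$ at positions $(p,p)$ and $(p,\hat i+s)$ for $p\in[s]$ are the rank-$\tilde\ell/s$ projections $R_{a,p}^TR_{a,p}$, not $\bI_{\tilde\ell}$. Hence the ``matches exactly the rows picked out'' heuristic fails; after sandwiching $H_j$ you still need $(R_{a,p}\otimes\bI_r)\tilde H_jR_{a,q}^T=\bO$ for $p\neq q$ when $\lfloor j/2\rfloor\neq a$, i.e.\ Lemma~\ref{lem:14}(ii). The paper handles both issues uniformly by inserting $\bI_\ell=\sum_gS^T_{a,g,h-1}S_{a,g,\hat i}+Q_{\hat i}$ (equation~\eqref{eq:I}) and killing each residual term via Lemma~\ref{lem:SHS=?} and~\eqref{eq:O}. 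One last misattribution: the pairing identity $U_bV_b=\R{F_b}$ plays no role in annihilating the residual $Q_{\hat i}$ in Case~(C) --- \eqref{eq:O} holds for any $M$. That identity (and $U_bV_{b\oplus1}=\bI_s$) is what makes the $H_{i,\cdot}$ matrices have the right form for the MDS computation (Lemma~\ref{lemma:repair_MDS_mat}), not for the parity-check equation here.
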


We first need the following technical lemma. The proof of it is exactly the
same as that of \cite[Lemma 4]{li2023msr}, and so we omit its proof. Let $$ K = \begin{bmatrix}
        K_{0,0}   & \cdots & K_{0,s-1}   \\
        \vdots    & \ddots & \vdots      \\
        K_{s-1,0} & \cdots & K_{s-1,s-1} \\
    \end{bmatrix}
$$ be a $s\times s$ block matrix in which each block entry is a column vector of length $r$.

\begin{lemma}\label{lem:14}
    For any $a ,c\in [n/2]$, $b, z \in [s]$, we have
    \begin{enumerate}
        \item[(i)] If $c=a$,
            $$(R_{a,b}\otimes\bI_r)\Tg{K}{c}R_{a,z}=\bI_{s^{\tilde c}}\otimes K_{b,z}.$$
        \item[(ii)] If $c\neq a$,
            $$(R_{a,b}\otimes\bI_r)\Tg{K}{c}R_{a,z}=\begin{cases}
                    \T{\frac n2-1}{K}{\tilde{c}} & \text{if }b=z     \\
                    \bO                          & \text{otherwise}.
                \end{cases}$$
    \end{enumerate}
    Here $$\tilde{c}=\begin{cases}
            c          & \text{if~} c<a  \\
            \frac n2-1 & \text{if~} c=a  \\
            c-1        & \text{if~} c>a. \\

        \end{cases}$$
\end{lemma}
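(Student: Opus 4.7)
The plan is to prove both parts block-entry-by-block-entry, using Lemma~\ref{lem:repre} to describe $\Tg{K}{c}$ and interpreting multiplication by $R_{a,b}\otimes\bI_r$ on the left, and $R_{a,z}$ on the right (viewed as the natural column selector of the appropriate shape), as picking out those rows/columns of $\Tg{K}{c}$ whose base-$s$ index has $a$-th digit equal to $b$, respectively $z$. Concretely, viewing $\Tg{K}{c}$ as an $s^{n/2}\times s^{n/2}$ array of blocks of shape $r\times 1$, the $(i',j')$ block entry of the product equals $\Tg{K}{c}(i,j)$, where $i,j\in[s^{n/2}]$ are reconstructed from the reduced indices $i',j'\in[s^{n/2-1}]$ by re-inserting $b$ and $z$ respectively into position $a$ of the base-$s$ expansion. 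The $\otimes\bI_r$ factor contributes nothing beyond preserving the row-width-$r$ block structure, so the problem is entirely a matter of tracking which block entries survive the selection.

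Next I would invoke Lemma~\ref{lem:repre}: the entry $\Tg{K}{c}(i,j)$ equals $K(i_c,j_c)$ when $i_y=j_y$ for every $y\in[n/2]\setminus\{c\}$, and is $\bO$ otherwise. In case $c=a$, the support condition becomes $i_y=j_y$ for all $y\ne a$; since the $a$-th digits are already pinned to $b$ and $z$ by the selectors, this reduces to $i'=j'$, with surviving block $K(b,z)=K_{b,z}$. The resulting matrix is therefore block-diagonal with constant block $K_{b,z}$ repeated $s^{n/2-1}$ times, i.e.\ $\bI_{s^{n/2-1}}\otimes K_{b,z}$, matching part (i) (where $\tilde c=n/2-1$). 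In case $c\ne a$, the support condition at $y=a$ forces $b=z$, so the product vanishes whenever $b\ne z$; when $b=z$, the remaining constraint is $i'_y=j'_y$ at every reduced position except the one indexed by $\tilde c$ (the image of $c$ after deleting coordinate $a$, which is $c$ if $c<a$ and $c-1$ if $c>a$), and the surviving block is $K(i_c,j_c)=K(i'_{\tilde c},j'_{\tilde c})$. Reading Lemma~\ref{lem:repre} in reverse at the lower level $n/2-1$ identifies this matrix with $\T{n/2-1}{K}{\tilde c}$, giving part (ii).

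The expected obstacle is almost entirely bookkeeping: one must set up the bijection between $[s^{n/2-1}]$ and the slice $\{i\in[s^{n/2}]:i_a=b\}$ explicitly, and confirm that deleting coordinate $a$ from the digit tuple relabels the original position $c$ to $\tilde c$ precisely as defined in the statement. Once that correspondence is pinned down, each part follows from a single application of Lemma~\ref{lem:repre} (forward in case (i), backward in case (ii)), with no computation involving the actual entries of $K$; the only subtlety is that the blocks of $K$ are rectangular ($r\times 1$), so one must be careful that $\mathbf{I}_{s^a}\bt K$ is interpreted as the $s\times s$ block matrix with blocks $\mathbf{I}_{s^a}\otimes K_{b,z}$ of size $rs^a\times s^a$, consistent with Definition~\ref{def:tensor}.
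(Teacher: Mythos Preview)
Your proposal is correct: the blockwise argument via Lemma~\ref{lem:repre}, combined with the observation that left-multiplying by $R_{a,b}\otimes\bI_r$ (resp.\ right-multiplying by what must be $R_{a,z}^T$, the paper's $R_{a,z}$ being a minor abuse of notation you correctly sidestep) selects block rows with $i_a=b$ (resp.\ block columns with $j_a=z$), is exactly the natural approach. The paper itself omits the proof, deferring to \cite[Lemma~4]{li2023msr}, and your argument is precisely the kind of direct digit-tracking computation that reference carries out, so there is nothing to add.
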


The following result follows directly from the above.
\begin{lemma}\label{lem:SHS=?}
    For $a,c\in[n/2]$, and $z\in [h]$, we have
    \begin{align*}
          & (S_{a,0,z}\otimes\bI_r)(\bI_{s+h-1}\otimes \Tg{K}{c}) S_{a,g,h-1}^T                                                  \\
        = & \begin{cases}
                \Tg{\blkdiag(K_{i,g\oplus_s i}:i\in[s])}{\tilde{c}} & \text{if~}a=c               \\
                \Tg{K}{\tilde{c}}                                   & \text{if~}a\neq c, g=0      \\
                \bO                                                 & \text{if~}a\neq c, g\neq 0,
            \end{cases}
    \end{align*}
    where $\tilde c$ is defined in Lemma~\ref{lem:14}.
\end{lemma}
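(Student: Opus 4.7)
The plan is to prove Lemma~\ref{lem:SHS=?} by expanding the triple product blockwise and then applying Lemma~\ref{lem:14} to each surviving block.

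First, I view $S_{a,0,z}$ and $S_{a,g,h-1}$ as $s\times(s+h-1)$ block matrices with blocks of size $\tilde\ell/s\times\tilde\ell$, and note that $\bI_{s+h-1}\otimes\Tg{K}{c}$ is block diagonal with $\Tg{K}{c}$ as its repeated diagonal block. Because of this block-diagonal structure, the $(i,l)$-th block of the triple product equals
\[
\sum_{j\in[s+h-1]}\bigl(S_{a,0,z}(i,j)\otimes\bI_r\bigr)\,\Tg{K}{c}\,S_{a,g,h-1}^{T}(j,l).
\]
By \eqref{equ:S_agz}--\eqref{equ:S_agh-1}, $S_{a,g,h-1}^{T}(j,l)=R_{a,g\oplus_s l}^{T}$ exactly when $j=l\in[s]$, while $S_{a,0,z}(i,j)=R_{a,i}$ exactly when $j=i$ or $j=z+s$. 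The constraint $j\in[s]$ rules out $j=z+s$, so only $j=l=i$ contributes, leaving the $(i,l)$-th block equal to $(R_{a,i}\otimes\bI_r)\,\Tg{K}{c}\,R_{a,g\oplus_s i}^{T}$ when $l=i$ and $\bO$ otherwise.

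Next, I apply Lemma~\ref{lem:14} with $b=i$ and second argument $g\oplus_s i$, and carry out a case split. When $a=c$, part (i) yields $\bI_{s^{n/2-1}}\otimes K_{i,g\oplus_s i}$ at each diagonal position; unfolding the definition shows that $\Tg{\blkdiag(K_{i,g\oplus_s i}:i\in[s])}{n/2-1}=\bI_{s^{n/2-1}}\bt\blkdiag(K_{i,g\oplus_s i}:i\in[s])$ is block diagonal with exactly these diagonal blocks. When $a\neq c$, part (ii) gives $\Phi_{n/2-1,\tilde c}(K)$ on each diagonal block when $i=g\oplus_s i$ (i.e., $g=0$), and $\bO$ when $g\neq 0$. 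In the $g=0$ case the assembled matrix equals $\bI_s\otimes\Phi_{n/2-1,\tilde c}(K)=\Tg{K}{\tilde c}$, an identity valid because $\tilde c\leq n/2-2$ whenever $a\neq c$, so that the initial $\bI_{s}$ factor may be absorbed into the outer identity of the blow-up.

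The main obstacle is purely bookkeeping: one must carefully align the block partitions of all three factors so that the triple product reduces to a block-diagonal matrix whose diagonal blocks are of the form handled by Lemma~\ref{lem:14}, and then fold the $s$ identical diagonal copies produced in the $a\neq c,\,g=0$ case back into the single blow-up $\Tg{K}{\tilde c}$. Past this bookkeeping, the lemma follows immediately from Lemma~\ref{lem:14}, as the paper's own comment ``follows directly from the above'' indicates.
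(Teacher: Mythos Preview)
Your proposal is correct and follows exactly the paper's approach: you reduce the triple product to the block-diagonal matrix $\blkdiag\bigl((R_{a,i}\otimes\bI_r)\Tg{K}{c}R_{a,g\oplus_s i}^{T}:i\in[s]\bigr)$ and then invoke Lemma~\ref{lem:14} blockwise, which is precisely what the paper does. The additional case analysis you spell out (especially the identity $\bI_s\otimes\Phi_{n/2-1,\tilde c}(K)=\Tg{K}{\tilde c}$ when $\tilde c\le n/2-2$) simply makes explicit what the paper leaves to the reader.
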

\begin{proof}
    By \eqref{equ:S_agz}-\eqref{equ:S_agh-1} we can compute that
    \begin{align*}
         & (S_{a,0,z}\otimes\bI_r)(\bI_{s+h-1}\otimes \Tg{K}{c}) S_{a,g,h-1}^T
        \\= & \blkdiag\left((R_{a,i}\otimes\bI_r)\Tg{K}{c}R_{a,g\oplus_s i}):i\in[s]\right)
          .
    \end{align*}
    The rest follows directly from Lemma~\ref{lem:14}.
\end{proof}
\subsection{Proof of Lemma~\ref{lemma:repair_MDS_mat}}





To begin with, we fix some $i\in\cF$ and set $i=2a+b$. Therefore $a=\lfloor
    \frac{i}{2} \rfloor$ and $b= i \bmod{2}.$ We first give alternative expressions
of the $n+s-1$ matrices
\begin{align}\label{n+s-1 matrices}
    H_{i,0},\cdots,H_{i,i-1},H^{\langle 0\rangle}_{i,i},\cdots,H^{\langle s-1
    \rangle}_{i,i},H_{i,i+1},\cdots,H_{i,n-1}.
\end{align}
For all $j\in[n]$, let
$$\widetilde{\left\lfloor \frac j2\right\rfloor}=\begin{cases}
        \left\lfloor \frac j2\right\rfloor   & \text{if~}\lfloor \frac j2\rfloor< a   \\
        \frac n2  -1              & \text{if~}\lfloor \frac j2\rfloor= a   \\
        \left\lfloor \frac j2\right\rfloor-1 & \text{if~} \lfloor \frac j2\rfloor> a.
    \end{cases}$$
\begin{enumerate}[1)]
    \item For any $g\in[s]$, by Lemma~\ref{lem:combining}, we have
          \begin{align*}
              H_{i,i}^{\langle g\rangle}= & (\cR^{\cF}_{i}\otimes \bI_r) H_{i} S^T_{a,g,h-1} \\=&(S_{a,0,\hat{i}}\otimes\bI_r)(\bI_{s+h-1}\otimes\Tg{K}{a})S^T_{a,g,h-1}
          \end{align*}
          where $K=(U_{b}\otimes \bI_r)K_{a,b}^{(r)}$.
          Then we can compute that
          \begin{align}\label{eq:Kii}
              K= & (U_b\otimes \bI_r)\left((V_b\otimes \mathbf{1}^{(r)})\odot \ \cK^{(r)}(\lambda_{si+[s]})\right)\nonumber
              \\= & (U_bV_b\otimes \mathbf{1}^{(r)})\odot \ \cK^{(r)}(\lambda_{si+[s]})\nonumber
              \\= & (\R{F_b}\otimes \mathbf{1}^{(r)})\odot \ \cK^{(r)}(\lambda_{si+[s]})
          \end{align}
          where $L_i^{(r)} = L^{(r)}(\lambda_i)$.
          Using Lemma~\ref{lem:SHS=?}, we can compute that for all $g\in[s]$,
          \begin{align}\label{subrepair1}
              \begin{split}
                  H_{i,i}^{\langle g\rangle} & = c_{b,g}\Tg{\blkdiag(L_{si+(g\oplus_s x)}:x\in[s])}{\widetilde{\lfloor \frac i2\rfloor}} .            \\
              \end{split}
          \end{align}
          where $c_{b,g}$ is the coefficient of $x^{g}$ in $F_b$.
    \item For $j\in [n]\setminus\{i\}$ with $\lfloor \frac j2\rfloor=a$, we have $j\bmod
              2=b\oplus 1$ and
          \begin{align*}
              H_{i,j}= & (\cR^{\cF}_{i}\otimes \bI_r) H_{j} S^T_{a,0,h-1} \\=&(S_{a,0,\hat{i}}\otimes\bI_r)(\bI_{s+h-1}\otimes\Tg{K}{a})S^T_{a,0,h-1}
          \end{align*}
          where $K=(U_{b}\otimes \bI_r)K_{a,b\oplus 1}^{(r)}$.
          Then we can compute that
          \begin{align}\label{eq:Kij}
              K= & (U_b\otimes \bI_r)\left((V_{b\oplus 1}\otimes \mathbf{1}^{(r)})\odot \ \cK^{(r)}(\lambda_{sj+[s]})\right)\nonumber
              \\= & (U_bV_{b\oplus 1}\otimes \mathbf{1}^{(r)})\odot \ \cK^{(r)}(\lambda_{sj+[s]})\nonumber
              \\= & (\bI_s\otimes \mathbf{1}^{(r)})\odot \ \cK^{(r)}(\lambda_{sj+[s]}).
          \end{align}
          Using Lemma~\ref{lem:SHS=?}, we can compute that
          \begin{align}\label{subrepair2}
              H_{i,j} & =\Tg{\blkdiag(L_{sj+x}:x\in[s])}{\widetilde{\lfloor \frac j2\rfloor}} \\
                      & =\Tg{\blkdiag(L_{sj+x}:x\in[s])}{\frac{n}{2}-1}.
          \end{align}
    \item For $j\in [n]\setminus\{i\}$ with $\lfloor \frac j2\rfloor \neq a$,
          \begin{align*}
              H_{i,j}= & (S_{a,0,\hat{i}}\otimes \bI_r)H_{j} S^T_{a,0,h-1}
              \\=&(S_{a,0,\hat{i}}\otimes \bI_r)(\bI_{s+h-1}\otimes\Tg{K}{\lfloor \frac  j2\rfloor}) S^T_{a,0,h-1}.
          \end{align*}
          where $K=K_{\lfloor \frac  j2\rfloor,j\bmod 2}^{(r)}$.
          And by Lemma~\ref{lem:SHS=?}, we can directly compute that
          \begin{align}\label{subrepair3}
              H_{i,j}=\Tg{K^{(r)}_{\lfloor \frac j2\rfloor,j\bmod 2}}{\widetilde{\lfloor \frac j2\rfloor}}.
          \end{align}
\end{enumerate}

From~\eqref{subrepair1}, \eqref{subrepair2}, and \eqref{subrepair3}, we can
observe that the structure of $n+s-1$ matrices defined in~\eqref{n+s-1
    matrices} is similar to that of parity-check sub-matrices
of~\eqref{eq:inter_code}. Using Lemma~\ref{lem:VMaB} and the same approach as
in Lemma~\ref{lem:global}, we can prove Lemma~\ref{lemma:repair_MDS_mat}.

\subsection{Proof of Lemma~\ref{lemma:repair_equation}}

\begin{lemma}\label{lem:RHC=?}
    For each $i\in \cF$, we write $i = 2a+b$, where $a\in [n/2]$ and $b\in [2]$. Then for any $j\in[n]$, we have
    \begin{align*}
         & (\cR_i^{\cF}\otimes \bI_r)H_jC_j \\= &
          \begin{cases}
            \sum_{g\in[s]}\left[ (\cR^{\cF}_{i}\otimes \bI_r )H_{i}S^T_{a,g,h-1}\right](S_{a,g,\hat{i}}C_{i}) & j=i,                                    \\
            \\
            \left[ (\cR^{\cF}_{i}\otimes \bI_r )H_{j}S^T_{a,0,h-1}\right](S_{a,0,\hat{i}}C_{j})               & j\neq i, \lfloor \frac j2\rfloor=a,     \\
            \\
            \left[ (S_{a,0,\hat{i}}\otimes \bI_r)H_{j}S^T_{a,0,h-1}\right](\cR^{\cF}_{i}C_{j})                & j\neq i, \lfloor \frac j2\rfloor\neq a.
        \end{cases}
    \end{align*}

\end{lemma}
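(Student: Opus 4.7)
The plan is to verify each of the three cases by computing, for every $p\in[s]$, the $p$-th sub-block (of length $r\tilde\ell/s$) on both sides of the claimed equation and checking that they coincide. Three algebraic identities will do almost all of the work: (i) the resolution of identity $\sum_{g\in[s]}R_{a,g}^{T}R_{a,g}=\bI_{\tilde\ell}$ from \eqref{eq:R}; (ii) the cooperative pairing identities $U_{b}V_{b}=\R{F_{b}}$ and $U_{b}V_{b\oplus 1}=\bI_{s}$; and (iii) the commutation rule $(R_{a,z}\otimes\bI_{r})\Tg{K}{c}=\T{n/2-1}{K}{\tilde c}\,R_{a,z}$ for $c\ne a$, which follows from Lemma~\ref{lem:14}(ii) combined with (i). Throughout, I will use the compatibility $(\Tg{U_{b}}{a}\otimes\bI_{r})\tilde H_{j}=\Tg{(U_{b}\otimes\bI_{r})K^{(r)}_{\lfloor j/2\rfloor,j\bmod 2}}{\lfloor j/2\rfloor}$ when $\lfloor j/2\rfloor=a$, as already exploited in the proof of Lemma~\ref{lemma:repair_MDS_mat}.

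For Case 1 ($j=i$), unrolling the right-hand side shows that the $p$-th sub-block of the $g$-th summand reduces to $(R_{a,p}\otimes\bI_{r})\Tg{K'}{a}R_{a,g\oplus_{s}p}^{T}R_{a,g\oplus_{s}p}(C_{i}^{(p)}+C_{i}^{(\hat i+s)})$, where $K'=(U_{b}\otimes\bI_{r})K^{(r)}_{a,b}$ and $C_{i}^{(t)}$ denotes the $t$-th length-$\tilde\ell$ block of $C_{i}$. Summing over $g$, and noting that $g\mapsto g\oplus_{s}p$ is a bijection of $[s]$, identity (i) collapses $\sum_{g}R_{a,g\oplus_{s}p}^{T}R_{a,g\oplus_{s}p}$ to $\bI_{\tilde\ell}$, leaving $(R_{a,p}\otimes\bI_{r})\Tg{K'}{a}(C_{i}^{(p)}+C_{i}^{(\hat i+s)})$, which is precisely the $p$-th sub-block of the left-hand side. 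The boundary case $\hat i=h-1$, in which $S_{a,g,h-1}$ has no extra block column, is analogous with the $C_{i}^{(\hat i+s)}$ term dropped.

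For Case 2, the identity $U_{b}V_{b\oplus 1}=\bI_{s}$ makes $K'':=(U_{b}\otimes\bI_{r})K^{(r)}_{a,b\oplus 1}$ block-diagonal, by \eqref{eq:Kij}. Lemma~\ref{lem:14}(i) therefore forces $(R_{a,p}\otimes\bI_{r})\Tg{K''}{a}R_{a,z}^{T}=\bO$ for $z\ne p$, and summing over $z$ via identity (i) yields $(R_{a,p}\otimes\bI_{r})\Tg{K''}{a}=(R_{a,p}\otimes\bI_{r})\Tg{K''}{a}R_{a,p}^{T}R_{a,p}$; unwinding the definitions reduces the claimed equation to exactly this identity. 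For Case 3 ($\lfloor j/2\rfloor\ne a$), the plan is to commute $\Tg{U_{b}}{a}$ past $\tilde H_{j}$: applying Lemma~\ref{lem:14}(i) to the scalar-entried $K=U_{b}$ and summing over the column index via (i) gives $R_{a,p}\Tg{U_{b}}{a}=\sum_{z}(U_{b})_{p,z}R_{a,z}$, and then identity (iii) converts both sides of the claim to the common expression $\T{n/2-1}{K^{(r)}_{\lfloor j/2\rfloor,j\bmod 2}}{\widetilde{\lfloor j/2\rfloor}}\,R_{a,p}\Tg{U_{b}}{a}(C_{j}^{(p)}+C_{j}^{(\hat i+s)})$. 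The step I expect to be the main obstacle is Case 1: one must track the coupling introduced by the $(\hat i+s)$-th block column of $S_{a,0,\hat i}$ through the three layers $\Pi=\bI_{s+h-1}\otimes\Tg{U_{b}}{a}$, $H_{i}$, and $S_{a,0,\hat i}$, ensure that the $C_{i}^{(\hat i+s)}$-contributions on the two sides match in sign and scale, and handle the boundary case $\hat i=h-1$ separately.
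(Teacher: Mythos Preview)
Your approach is correct and takes a genuinely different route from the paper's own proof. The paper does not work sub-block by sub-block; instead it introduces an auxiliary $\ell\times\ell$ matrix $Q_{\hat i}$ (defined in~\eqref{eq:Qz}) and establishes the global identity $\sum_{g\in[s]}S_{a,g,h-1}^{T}S_{a,g,\hat i}+Q_{\hat i}=\bI_{\ell}$ together with the annihilation property~\eqref{eq:O}. Each of the three cases is then obtained by inserting this resolution of the identity between $H_{j}$ and $C_{j}$ (or between $H_{j}$ and $E_{a,b}C_{j}$ in Case~3) and arguing that the $Q_{\hat i}$ term, and for Cases~2--3 also the $g\neq 0$ terms, vanish via~\eqref{eq:O} and Lemma~\ref{lem:SHS=?}. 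For Case~3 the paper moreover invokes Lemma~\ref{lem:exchange} to commute $E_{a,b}$ past $H_{j}$ at the full $\ell$-level, rather than your row-level identity $R_{a,p}\Tg{U_{b}}{a}=\sum_{z}(U_{b})_{p,z}R_{a,z}$ combined with $(R_{a,z}\otimes\bI_{r})\Tg{K}{c}=\T{n/2-1}{K}{\tilde c}\,R_{a,z}$.

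What each approach buys: the paper's argument is more structural and stays at the $\ell\times\ell$ level throughout, but needs the auxiliary object $Q_{\hat i}$ and two preliminary identities. Your argument is more elementary in that it uses only the $\tilde\ell$-level resolution~\eqref{eq:R} and the already-available Lemma~\ref{lem:14}, at the price of explicitly tracking the $p$-th sub-block and the $(\hat i+s)$-column contribution in each case. One small imprecision: in your remark on the boundary case $\hat i=h-1$ it is not $S_{a,g,h-1}$ that changes (that matrix \emph{never} has an extra block column), but rather $S_{a,g,\hat i}$ and the $S_{a,0,\hat i}$ inside $\cR_{i}^{\cF}$; the net effect---dropping the $C_{i}^{(\hat i+s)}$ term on both sides---is as you say.
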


\begin{proof}
    Firstly, for $z\in[h]$, we define an $(s+h-1)\times (s+h-1)$ block matrix
    \begin{equation}
        \label{eq:Qz}
        Q_z(i,j)=\begin{cases}
            \wn\bI_{\tilde{\ell}} & \text{if~}i=j\in[s+h-1]\backslash[s] \\
            -\bI_{\tilde{\ell}}   & \text{if~}i\in[s],~j=z+s             \\
            \wn \bO               & \text{otherwise,}
        \end{cases}
    \end{equation}
    and we can see that $Q_z$ is an $\ell\times \ell$ matrix.
    Furthermore, we have the following two conclusions, which can be proved directly by  \eqref{eq:R}, \eqref{def:S} and \eqref{eq:Qz}:
    \begin{enumerate}
        \item For any $a\in[n/2]$ and $z\in[h]$,
              \begin{align}\label{eq:I}
                  \sum_{g\in[s]} S^T_{a,g,h-1}S_{a,g,z} +Q_z=\bI_{\ell}.
              \end{align}
        \item  For any $a\in[n/2]$, $z\in[h]$ and $r\tilde{\ell}\times \tilde{\ell}$ matrix
              $M$, we have
              \begin{align}\label{eq:O}
                  (S_{a,0,z}\otimes \bI_r ) (\bI_{s+h-1}\otimes M)Q_z=\bO.
              \end{align}
    \end{enumerate}
    We write $E_{a,b}=\bI_{s+h-1}\otimes\Tg{U_{b}}{a}$. Then $R_{i}^{\cF} = S_{a,0,\hat i}E_{a,b}$.
    \begin{enumerate}
        \item If $j=i$,
              \begin{align*}
                   & (\cR_i^{\cF}\otimes \bI_r)H_jC_j
                  \\= & (\cR^{\cF}_{i}\otimes \bI_r )H_{i}(\sum_{g\in[s]} S^T_{a,g,h-1}S_{a,g,\hat{i}} +Q_{\hat{i}}) C_{i}
                  \\= & \sum_{g\in[s]}\left[ (\cR^{\cF}_{i}\otimes \bI_r )H_{i}S^T_{a,g,h-1}\right](S_{a,g,\hat{i}}C_{i})
                  \\           & +(\cR^{\cF}_{i}\otimes \bI_r )H_{i}Q_{\hat{i}} C_{i}
              \end{align*}

              By \eqref{eq:O}, we have \begin{align*}
                   & (\cR^{\cF}_{i}\otimes \bI_r )H_{i}Q_{\hat{i}} \\=&(S_{a,0,\hat{i}}\otimes\bI_r)(\bI_{s+h-1}\otimes\Tg{K}{a})Q_{\hat{i}}\\=&\bO,
              \end{align*} where
              $K=(U_{b}\otimes \bI_r)K_{a,b}^{(r)}$, computed in \eqref{eq:Kii}. Therefore,
              \begin{align*}
                   & (\cR_i^{\cF}\otimes \bI_r)H_jC_j
                  \\=&\sum_{g\in[s]}\left[ (\cR^{\cF}_{i}\otimes \bI_r )H_{i}S^T_{a,g,h-1}\right](S_{a,g,\hat{i}}C_{i}).
              \end{align*}

        \item For $j\in[n]\backslash\{i\}$ and $\lfloor j/2\rfloor= a$, Similar to the above,
              we have
              \begin{align*}
                   & (\cR_i^{\cF}\otimes \bI_r)H_jC_j
                  \\= & (\cR^{\cF}_{i}\otimes \bI_r )H_{j}(\sum_{g\in[s]} S^T_{a,g,h-1}S_{a,g,\hat{i}} +Q_{\hat{i}}) C_{j}
                  \\= & \sum_{g\in[s]}\left[ (\cR^{\cF}_{i}\otimes \bI_r )H_{j}S^T_{a,g,h-1}\right](S_{a,g,\hat{i}}C_{j})
                  \\&+(\cR^{\cF}_{i}\otimes \bI_r )H_{j}Q_{\hat{i}} C_{j}.
              \end{align*}
              Let $K=(U_{b}\otimes \bI_r)K_{a,b\oplus 1}^{(r)}$, computed in \eqref{eq:Kij}. By \eqref{eq:O}, we have \begin{align*}
                   & (\cR^{\cF}_{i}\otimes \bI_r )H_{j}Q_{\hat{i}} \\=&(S_{a,0,\hat{i}}\otimes\bI_r)(\bI_{s+h-1}\otimes\Tg{K}{a})Q_{\hat{i}}\\=&\bO.
              \end{align*}
              By Lemma~\ref{lem:SHS=?} we can get that for any $g\in[s]\backslash\{0\}$,
              \begin{align*}
                   & (\cR^{\cF}_{i}\otimes \bI_r )H_{j}S^T_{a,g,h-1}
                  \\= & (S_{a,0,\hat{i}}\otimes\bI_r)(\bI_{s+h-1}\otimes\Tg{K}{a})S^T_{a,g,h-1}
                  \\=&\bO.
              \end{align*}
              Combining the above we have
              \begin{align*}
                   & (\cR^{\cF}_{i}\otimes \bI_r )H_{j}(\sum_{g\in[s]} S^T_{a,g,h-1}S_{a,g,\hat{i}} +Q_{\hat{i}}) C_{j}
                  \\=&\left[ (\cR^{\cF}_{i}\otimes \bI_r )H_{j}S^T_{a,0,h-1}\right](S_{a,0,\hat{i}}C_{j}).
              \end{align*}

        \item For $j\in[n]\backslash\{i\}$ and $\lfloor j/2\rfloor\neq a$. Using
              Lemma~\ref{lem:exchange} directly, we have $$(E_{a,b}\otimes
                  \bI_r)H_j=H_jE_{a,b}.$$ Then
              \begin{align*}
                   & (\cR_i^{\cF}\otimes \bI_r)H_jC_j
                  \\= & (S_{a,0,\hat{i}}\otimes \bI_r) (E_{a,b}\otimes \bI_r)H_{j} C_{j}
                  \\= & (S_{a,0,\hat{i}}\otimes \bI_r) H_{j}E_{a,b} C_{j}
                  \\= & (S_{a,0,\hat{i}}\otimes \bI_r) H_{j}(\sum_{g\in[s]} S^T_{a,g,h-1}S_{a,g,\hat{i}} +Q_{\hat{i}})E_{a,b} C_{j}
                  \\= & \sum_{g\in[s]}\left[ (S_{a,0,\hat{i}}\otimes \bI_r)H_{j}S^T_{a,g,h-1}\right](S_{a,g,\hat{i}}E_{a,b}C_{j})
                  \\           & +(S_{a,0,\hat{i}}\otimes \bI_r)H_{j}Q_{\hat{i}} E_{a,b}C_{j}.
              \end{align*}
              Because $H_j=\bI_{s+h-1}\otimes\Tg{K_{\lfloor \frac  j2\rfloor,j\bmod 2}^{(r)}}{\lfloor \frac  j2\rfloor}$, using Lemma~\ref{lem:SHS=?} and \eqref{eq:O}, we have
              \begin{enumerate}[(i)]
                  \item for any $g\in[s]\backslash\{0\}$, $$(S_{a,0,\hat{i}}\otimes
                            \bI_r)H_{j}S^T_{a,g,h-1}=\bO.$$
                  \item $(S_{a,0,\hat{i}}\otimes \bI_r)H_{j}Q_{\hat{i}}=\bO$.
              \end{enumerate}
              Therefore, we have
              \begin{align*}
                   & (\cR^{\cF}_{i}\otimes \bI_r )H_{j}(\sum_{g\in[s]} S^T_{a,g,h-1}S_{a,g,\hat{i}} +Q_{\hat{i}}) C_{j}
                  \\=&\left[ (S_{a,0,\hat{i}}\otimes \bI_r)H_{j}S^T_{a,0,h-1}\right](S_{a,0,\hat{i}}E_{a,b}C_{j})
              \end{align*}
    \end{enumerate}

\end{proof}

In summary, we have
\begin{align}
      & (R_i^{\cF}\otimes \bI_r)(\sum_{j\in [n]}H_jC_j)                                                                            \\
    = & \sum_{j\in [n]}(R_i^{\cF}\otimes \bI_r)H_jC_j\label{eq:a}                                                                  \\
    = & \sum_{g\in[s]}\left[ (\cR^{\cF}_{i}\otimes \bI_r )H_{i}S^T_{a,g,h-1}\right](S_{a,g,\hat{i}}C_{i})\nonumber
    \\           & +\left[ (\cR^{\cF}_{i}\otimes \bI_r )H_{2a+(b\oplus1)}S^T_{a,0,h-1}\right](S_{a,0,\hat{i}}C_{2a+(b\oplus1)})\nonumber
    \\           & +\sum_{j\in[n]\backslash(2a+[2])}\left[ (S_{a,0,\hat{i}}\otimes \bI_r)H_{j}S^T_{a,0,h-1}\right](\cR^{\cF}_{i}C_{j})
    \label{eq:b}                                                                                                                   \\
    = & \sum_{g\in[s]}H_{i,i}^{\langle g\rangle}C_{i,i}^{\langle g\rangle}+\sum_{j\in[n]\backslash\{i\}}H_{i,j}C_{i,j}\label{eq:c}
    \\= &\bzero.
\end{align}
Using Lemma \ref{lem:RHC=?}, we can deduce \eqref{eq:b} from \eqref{eq:a}. By applying notations \eqref{def:H,C1}, \eqref{def:H,C2} and \eqref{def:H,C3}, we can transform \eqref{eq:b} to \eqref{eq:c}.

\section{proof of Lemma~\ref{lem:invertible}}
\label{app:proof-invertible}

For any $i,j \in \cF$ we define $$P_{j,i}=\begin{cases}
        \begin{bmatrix}
            R_{\lfloor \frac j2\rfloor,0} \\
            \vdots                        \\
            R_{\lfloor \frac j2\rfloor,s-1}
        \end{bmatrix}      & \text{if~} \lfloor \frac j2\rfloor=\lfloor \frac i2\rfloor,                                          \\\\
        \begin{bmatrix}
            R_{\lfloor \frac j2\rfloor,0} \\
            \vdots                        \\
            R_{\lfloor \frac j2\rfloor,s-1}
        \end{bmatrix}\Tg{U_{j\bmod 2}}{\lfloor \frac j2\rfloor} & \text{if~} \lfloor \frac j2\rfloor\neq \lfloor \frac i2\rfloor,
    \end{cases}$$
which are all invertible matrices.

We also define that $E_z= \epsilon_z \otimes \bI_{\tilde{\ell}}$ where
$\epsilon_z$ is the $z$-th row of $\bI_{s+h-1}$. We can easily check that the
the $\ell\times \ell$ matrix formed by vertically joining the $s+h-1$ matrices
$E_z,z\in[s+h-1]$, is invertible. For $x,y\in[s]$, set $W_{x,y}$ to be the
$s\times s$ block matrix with block entry of size $\tilde{\ell}/s $ and for all
$i,j\in[s]$,
\begin{align}
    W_{x,y}(i,j)=\begin{cases}
                     \bI_{\tilde{\ell}/s} & i=x,j=y           \\
                     \bO                  & \text{otherwise}.
                 \end{cases}
\end{align}

We now split the proof into two cases.

\textbf{Case 1:  $\hat{i}\in[h-1]$.}
We can see for all $g\in[s]$,
\begin{align*}
    \scalebox{0.72}{$\begin{array}{cc}\setlength{\arraycolsep}{0.5pt}
                                   & \hspace{1.8in}(\hat{i}+s)\text{-th~block~column}                                                  \\
                                   & \hspace{1.8in}\downarrow                                      \\
        D^{\langle g\rangle}_{i,i} & =\left[\begin{matrix}
                                                     R_{a,g\oplus_s 0} &\cdots &\bO &\bO&\cdots&\bO &R_{a,g\oplus_s 0} &\bO&\cdots&\bO \\[.1em]
                                                     \vdots&\ddots&\vdots &\vdots&\ddots&\vdots &\vdots &\vdots&\ddots&\vdots \\[.9em]
                                                     \bO&\cdots &R_{a,g\oplus_s (s-1)} &\bO&\cdots&\bO &R_{a,g\oplus_s (s-1)} &\bO&\cdots&\bO
                                                 \end{matrix}\right]. \\
    \end{array}$}
\end{align*}

By performing operations on the rows of the matrices, we can get for $z\in[s]$,
\begin{align}\label{eq:z}
    M_z:= & P_{i,i}^{-1}\left(\sum_{g\in[s]}W_{g\oplus_s z,z}D^{\langle g\rangle}_{i,i}\right)\nonumber
    =E_z+E_{\hat{i}+s}.
\end{align}
Let $k\in\cF$ be the failed node with $\hat{k}=h-1$. Then we can check that
\begin{align*}
    E_{\hat{i}+s} & =P_{k,i}^{-1}\left(
    \sum_{z\in[s]}W_{z,z}P_{k,i}M_z-D_{k,i}
    \right)
\end{align*}
and for all $z\in[s]$,
\begin{align*}
    E_z=M_z-E_{\hat{i}+s}.
\end{align*}
For any $j\in\cF\backslash\{i,k\}$, i.e. $\hat{j}\neq h-1,\hat{i}$, we can also check that
\begin{align*}
    E_{\hat{j}+s} & =P_{j,i}^{-1}\left(D_{j,i}-
    \sum_{z\in[s]}W_{z,z}P_{j,i}E_z\right).
\end{align*}
Therefore, we can see that every $E_z, z \in [s+h-1]$ can be written as a linear combination of  the $s+h-1$ matrices $D_{i,i}^{\langle g\rangle},g\in[s],D_{j,i}, j\in\cF\backslash\{i\}$. This implies that the $\ell\times \ell$ matrix formed by vertically joining the $s+h-1$ matrices, which includes $D_{i,i}^{\langle g\rangle},g\in[s],D_{j,i}, j\in\cF\backslash\{i\}$, is invertible for all $i\in\cF$ satisfying $\hat{i}\in[h-1]$.\\

\textbf{Case 2:  $\hat{i}=h-1$.}
In this case, we can see for all $g\in[s]$,
$$D^{\langle g\rangle}_{i,i}=\begin{bmatrix}
        R_{a,g\oplus_s 0} & \cdots & \bO                   & \bO    & \cdots & \bO    \\
        \vdots            & \ddots & \vdots                & \vdots & \ddots & \vdots \\
        \bO               & \cdots & R_{a,g\oplus_s (s-1)} & \bO    & \cdots & \bO
    \end{bmatrix}$$
As same as case 1, we can get for all $z\in[s]$,
\begin{align*}
    E_z= & P_{i,i}^{-1}\left(
    \sum_{g\in[s]}W_{g\oplus_s z,z}D^{\langle g\rangle}_{i,i}
    \right).
\end{align*}
And then for all $j\in\cF\backslash\{i\}$, we have
\begin{align*}
    E_{\hat{j}+2} & =P_{j,i}^{-1}\left(D_{j,i}-
    \sum_{z\in[s]}W_{z,z}P_{j,i}E_z\right).
\end{align*}
As above, we can get all $E_z$ for $z\in[s+h-1]$ by linear combination of  the $s+h-1$ matrices $D_{i,i}^{\langle g\rangle},g\in[s],D_{j,i}, j\in\cF\backslash\{i\}$ again, which means the $\ell\times \ell$ matrix formed by vertically joining the $s+h-1$ matrices $D_{i,i}^{\langle g\rangle},g\in[s],D_{j,i}, j\in\cF\backslash\{i\}$, is invertible for $\hat{i}=h-1$.
\bibliographystyle{ieeetr}
\bibliography{TIT}
\vfill

\end{document}

\clearpage

\setcounter{page}{1}

\begin{figure*}
    \section*{\huge IT-23-0911.R1 Author Response}
    \vspace*{0.2in}
\end{figure*}

We thank the AE and the reviewers for their careful reading of our revised
draft and the constructive comments. Below are our responses.
\subsection{Response to Reviewer 1:}
There does not seem to be any new result presented in the submitted paper. In my opinion, the paper in its current form does not contain a sufficiently large amount of new results compared to the conference version.

\response{
In the ISIT version of our work, we were only able to construct $(h,k+1)$-cooperative MSR codes, where the number of helper nodes was restricted to the case $d = k+1$. After considerable effort, we have successfully extended our construction to accommodate any number of helper nodes. In our opinion, this represents a significant improvement over the previous result, as it greatly enhances the flexibility and applicability of the coding scheme.
}

\subsection{Response to Reviewer 2:}
In this paper, the authors introduced an explicit construction of MSR codes and
devised an optimal cooperative repair scheme for these codes with any number of
h failed nodes. Compared with known results, this work largely reduces the
sub-packetization level of MSR codes with optimal cooperative repair property.
Overall, based on my thorough examination, the results and proofs presented in
this article seem reasonable. This paper is generally well-written, but several
improvements are necessary.

\begin{enumerate}
    \item Add a paragraph that clearly outlines the technical differences between this
          paper and the references mentioned in Section I-B. This will help better
          highlight the contribution of this work and distinguish it from previous
          studies. \\ 
          \response{We rewrite the first paragraph of Section I-B and provide an outline of our construction. The key technical contribution of this paper, as compared to prior work, is the introduction of cooperative pairing matrices. Additionally, in Section IV, we present a sketch of how these cooperative pairing matrices are used in our repair scheme.
}
    \item This article has a large number of symbols, which may be confusing. To improve
          clarity, please include a table summarizing all the symbols, defining their
          meanings, and explaining how they are related. \\ 
          \response{ We include a
              Table in Appendix~\ref{tab:Tab of Notation} to clarify the meanings of all the important symbols.
          }
    \item There seems to be no difference between Algorithm 1 and Steps 1 and 2 on Page
          9. Suggest merging these sections or replacing one of them with an example to
          improve the readability of the paper.\\ 
          \response{Thank you for your comments. We want to retain Algorithm 1, as well as Steps 1 and 2, for the convenience of the readers. We tried running a small example, but we found that it would take at least three pages to write it out fully. Therefore, we have decided not to include it in this version. However, we may consider adding the example in an updated version on arXiv.
  }
    \item Before presenting the repair protocols in detail, it is suggested that a
          high-level overview of key ideas should be provided, such as the role of the
          cooperative pairing matrix, the criteria for selecting a repair matrix, and the
          differences in within-group and between-group repair approaches.\\ 
          \response{
              In Section IV, we present a sketch of how these cooperative pairing matrices are used in our repair scheme.              




          }
    \item The paper mentions ''group'' several times, but does not provide a strong idea to
          discuss the design of intra-group or inter-group repair schemes. It only
          proposes formulas (11)-(19), which is not very intuitive. Please provide a more
          detailed explanation of the repair process of these nodes.

          \response{
              We've changed the label from (11)-(19) to~\eqref{def:H,C1}-\eqref{def:H,C3}. 
              Not only formulas~\eqref{def:H,C1}-\eqref{def:H,C3}, Appendix~\ref{app:proof-invertible} also reflect the differences between intra-group and inter-group. 
              The details can be found in Appendix~\ref{app:proof-invertible}.
          }
    \item The cooperative repair scheme proposed in this paper is based on methods from
              [12], even the proofs of many theorems and lemmas rely on [12]. However, the
          design of the parity-check sub-matrices differs from the one in [12]. These
          differences should be emphasized, if possible highlight the main idea,
          particularly the choice of $V_0$ and $V_1$.
          
          \response{
In Section I-B, we provide an outline of our construction. Additionally, in Section IV, we present a sketch of how these cooperative pairing matrices are employed in our repair scheme. 
We hope this will help readers appreciate the critical role of cooperative pairing matrices.

%
%
          }
    \item The size of the finite field is an important factor. Please include a
          comparison of them in Table 1.

          \response{
              Thank you for your suggestion, we now added the comparison of the size of the finite field in Table 1.
          }
\end{enumerate}

\hspace*{\fill} \\ \hspace*{\fill} \\
There are also some typos and grammatical errors that need to be fixed.

\begin{enumerate}
    \item[Page 1.] The subscript of the symbols in the whole paper should uniformly start
        from 0 or 1.
    \item[Page 2.] similarly as [31] → similarly to.
    \item[Page 7.] The Q in Lemma 7 is not the same as the one in Lemma 6, please
        distinguish it.
    \item[Page 8.] The dimensions of $\cR_i ^\cF$ and the matrices in (11)-(19) should be
        illustrated.
    \item[Page 8.] their proofs can be find in Appendices → be found.
\end{enumerate}
\response{ Thanks for your careful reading. We revisited the article and corrected some typos and errors, including but not limited to those mentioned by you.}